\documentclass{article}
\usepackage{mystyle2}
\usepackage{todonotes}
\usepackage{url}

\allowdisplaybreaks

%\journal{Theoretical Computer Science}

%\begin{document}

%\begin{frontmatter}

\title{A Simple Optimal Contention Resolution Scheme for Uniform Matroids%
\thanks{This project received funding from the European Research Council (ERC) under the European Union's Horizon 2020 research and innovation programme (grant agreement No 817750).
\includegraphics[height=4mm]{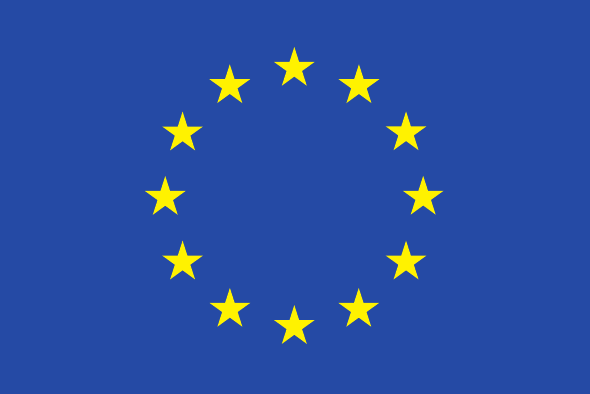} \includegraphics[height=4mm]{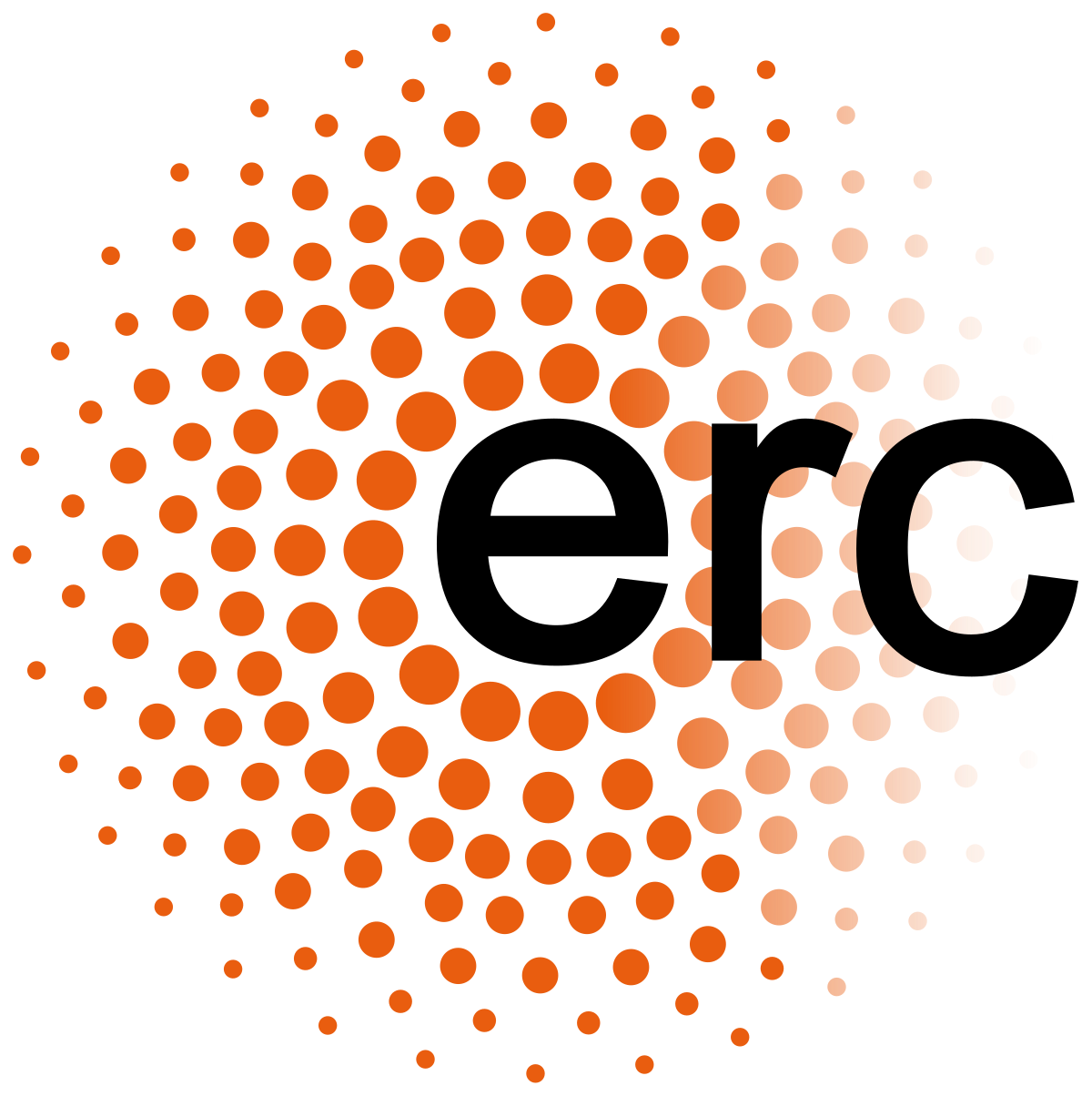}
}%
}

\author{
    Danish Kashaev \footnote{\texttt{Danish.Kashaev@cwi.nl}, ETH Z\"{u}rich, Switzerland.}
   \and
    Richard Santiago \footnote{\texttt{rtorres@ethz.ch}, ETH Z\"{u}rich, Switzerland.}}

%\title{A Simple Optimal Contention Resolution Scheme for Uniform Matroids}
%\author[1]{Danish Kashaev} \ead{Danish.Kashaev@cwi.nl}
%\author[2]{Richard Santiago} \ead{rtorres@ethz.ch}
%\affiliation[1]{organization = {ETH Zurich}, country = {Switzerland}}
%\affiliation[2]{organization = {ETH Zurich}, country = {Switzerland}}

\date{\vspace{-3ex}}

\begin{document}
\maketitle

\begin{abstract}
Contention resolution schemes (or CR schemes), introduced by Chekuri, Vondrak and Zenklusen, are
a class of randomized rounding algorithms for converting a
fractional solution to a relaxation for a down-closed constraint
family into an integer solution. A CR scheme takes a fractional point $x$ in a relaxation polytope, rounds each coordinate $x_i$ independently to get a possibly non-feasible set, and then drops some elements in order to satisfy the constraints. Intuitively, a contention resolution scheme is $c$-balanced if every element $i$ is selected with probability at least $c \cdot x_i$.

It is known that general matroids admit a $(1-1/e)$-balanced CR scheme, and that this is (asymptotically) optimal. This is in particular true for the special case of uniform matroids of rank one. 
In this work, we provide a simple and explicit monotone CR scheme for uniform matroids of rank $k$ on $n$ elements with a balancedness of $1 - \binom{n}{k}\:\left(1-\frac{k}{n}\right)^{n+1-k}\:\left(\frac{k}{n}\right)^k$, and show that this is optimal. As $n$ grows, this expression converges from above to $1 - e^{-k}k^k/k!$. While this asymptotic bound can be obtained by combining previously known results, these require defining an exponential-sized linear program, as well as using random sampling and the ellipsoid algorithm. Our procedure, on the other hand, has the advantage of being simple and explicit. This scheme extends naturally into an optimal CR scheme for partition matroids.
\end{abstract}

\section{Introduction}

Contention resolution schemes were introduced by Chekuri, Vondrak, and Zenklusen \cite{crs} 
as a tool for submodular maximization under various types of constraints. A set function $f:2^N \to \R$ is submodular if for any two sets $A\subseteq B \subseteq N$ and any element $v \notin B$, the corresponding marginal gains satisfy $f(A \cup \{v\}) -f(A) \geq f(B \cup \{v\}) -f(B)$. Submodular functions are a classical object in combinatorial optimization and operations research \cite{lovasz1983submodular}. A family of subsets $\mathcal{I} \subseteq 2^N$ is called an independence family if $\emptyset \in \mathcal{I}$ and $A\subseteq B \in \mathcal{I}$ implies $A \in \mathcal{I}$.
Given a finite ground set $N$, an independence family $\mathcal{I} \subseteq 2^N$, and a submodular set function $f:2^N \mapsto \mathbb{R}$, the problem consists of (approximately) solving $\max_{S\in \mathcal{I}}f(S)$. 

A successful technique to tackle this problem in recent years has been the relaxation and rounding approach. It consists of first relaxing the discrete problem into a continuous version $\max_{x \in P_{\I}}F(x)$, where $F:[0,1]^N \mapsto \mathbb{R}$ is a suitable continuous extension of $f$, and $P_{\I}$ is a relaxation polytope of the independence family $\mathcal{I}$
\footnote{Given $\mathcal{I} \subseteq 2^N$, let $\text{conv}(\mathcal{I})$ denote the convex hull of the set $\{\mathbf{1}_S : S \in \I\}$. Then $P \subseteq [0,1]^N$ is called a relaxation polytope of $\mathcal{I}$ if $P \cap \{0,1\}^N = \text{conv}(\mathcal{I}) \cap \{0,1\}^N$. That is, if $P$ and $\text{conv}(\mathcal{I})$ have the same set of integer points.}.
The first step of the relaxation and rounding approach then approximately solves $\max_{x \in P_{\I}}F(x)$ to obtain a fractional point $x\in P_{\mathcal{I}}$.

 In order to get a feasible solution to the original problem, we then need to round this fractional point into an integral and feasible (i.e., independent) one while keeping the objective value as high as possible. Contention resolution schemes are a powerful tool to tackle this problem, and have found other applications outside of submodular maximization \cite{adamczyk2018random,ocrs,optimal_ocrs}.

At the high level, given a fractional point $x$, the procedure first generates a random set $R(x)$ by independently including each element $i$ with probability $x_i$. Since $R(x)$ might not necessarily belong to $\mathcal{I}$, the contention resolution scheme then removes some elements from it in order to get an independent set. 
%The algorithm can be either deterministic or randomized.
We denote the support of a point $x$ by $\text{supp}(x):= \{i \in N \mid x_i > 0\}$. A CR scheme is then formally defined as follows.

\begin{definition}[CR scheme]\label{def:crs}
	$\pi = (\pi_x)_{x\in P_{\I}}$ is a $c$-balanced \emph{contention resolution scheme} for the polytope $P_{\I}$ if for every $x \in P_{\I}$, $\pi_x$ is an algorithm that takes as input a set $A\subseteq \text{supp}(x)$ and outputs an independent set $\pi_x(A) \in \I$ contained in $A$ such that
	\begin{equation*}
		%\label{bc_crs}
		\mathbb{P}\Big[i \in \pi_x(R(x)) \mid i \in R(x)\Big] \geq c \quad \forall i \in \text{supp}(x).
	\end{equation*}
	Moreover, a contention resolution scheme is \emph{monotone} if for any $x \in P_{\I}$:
	\begin{equation*}
		%\label{monotone}
		\mathbb{P}[i \in \pi_x(A)] \geq \mathbb{P}[i \in \pi_x(B)] \quad \text{for any } i \in A \subseteq B \subseteq \text{supp}(x).
	\end{equation*}
\end{definition}

Notice that a contention resolution scheme can have a different algorithm for each $x \in P_{\mathcal{I}}$.
A $c$-balanced contention resolution scheme ensures that every element in the random set $R(x)$ is kept with probability at least $c$. The goal when designing CR schemes is thus to maximize such $c$, known as balancedness.

Moreover, monotonicity is a desirable property for a $c$-balanced CR scheme to have, since one can then get approximation guarantees for the constrained submodular maximization problem via the relaxation and rounding approach (see \cite{crs} for more details).

A closely related notion to contention resolution schemes is the \emph{correlation gap}, originally introduced in \cite{agrawal2012price} and extended to constraint families $\mathcal{I} \subset 2^N$ in \cite{crs}. Formally, the correlation gap of a family $\I \subseteq 2^N$ is given by
\[
\kappa(\I) = \inf_{x \in \mathtt{conv}(\I), y\geq 0} \frac{\E \left[\max_{S\subseteq R(x), S \in \I} \sum_{i \in S} y_i\right]}{\sum_{i \in N} x_i y_i}, 
\]
where $\text{conv}(\mathcal{I})$ denotes the convex hull of the set $\{\mathbf{1}_S : S \in \I\}$. This definition is the one given in \cite{crs}, while the original definition from \cite{agrawal2012price} uses the inverse ratio.
The connection between CR schemes and the correlation gap is then summarized by the following result of \cite[Theorem 4.6]{crs}: the correlation gap of $\I$ is equal to the maximum $c$ such that $\I$ admits a $c$-balanced (but not necessarily monotone) CR scheme.

By presenting a variety of CR schemes for different constraints, the work in \cite{crs} gives improved approximation algorithms for linear and submodular maximization problems under matroid, knapsack, matchoid constraints, as well as their intersections.\footnote{We postpone the formal definitions and basic matroid background to Section~\ref{sec:matroids}.}
CR schemes have also been studied for other types of independence families  \cite{crs_bipartite_matchings,correlation_gap_matching}, or by having the elements of the random set $R(x)$ arrive in an online fashion  \cite{bayesian_combinatorial_actions,ocrs,optimal_ocrs,adamczyk2018random,livanos2021simple, arnosti2021tight, jiang2022tight}. 
In this work, we restrict our attention to matroid constraints and the offline setting (i.e., we know the full set $R(x)$ in advance).

A monotone CR scheme with a balancedness of $1 - (1-1/n)^n$ for the uniform matroid of rank one is given in \cite{approximation_allocation,submodular_welfare}, where $n$ denotes the size of the ground set. It is also shown that this is optimal. That is, there is no $c$-balanced CR scheme for the uniform matroid of rank one with $c > 1 - (1-1/n)^n$. The work of \cite{crs} extends this result by proving the existence of a monotone $1 - (1-1/n)^n$-balanced CR scheme for any matroid. This requires defining an exponential-sized linear program and using its dual. The existence argument can then be turned into an efficient procedure by using random sampling and the ellipsoid algorithm to construct an efficient CR scheme with a balancedness of $1 - (1-1/n)^n - \epsilon$, running in time polynomial in the input size and $1/\epsilon$ for any fixed $\epsilon > 0$. Since $1-(1-1/n)^n$ converges to $1 - 1/e$, this corresponds to an efficient asymptotically optimal CR scheme with a balancedness of $1-1/e$. 

For the uniform matroid of rank $k$ (i.e., cardinality constraints), the work of \cite{yan2011mechanism} establishes a correlation gap of $1 - e^{-k} k^k/k!$. Combining this with a reduction from \cite{crs}
proves the existence of a $(1 - e^{-k} k^k/k!)$-balanced CR scheme, and the asymptotic optimality of this bound. The existence of such a scheme can also be obtained by combining the same reduction from \cite{crs} and a result of \cite{barman2021tight}. However, the main drawback of these approaches is their lack of simplicity. %In a different context, another procedure which may be interpreted as a monotone CR scheme for cardinality constraints with a slightly worse balancedness of $1-1/\sqrt{k+3}$ is shown in \cite{bayesian_combinatorial_actions}.

In the setting where the elements of $R(x)$ arrive in an online adversarial fashion, the work of  \cite{bayesian_combinatorial_actions} gives a procedure with a balancedness of $1-1/\sqrt{k+3}$ for uniform matroids of rank $k$.
It remained unknown whether this bound was tight, until the recent work of \cite{jiang2022tight} settled this question.
They show that the optimal balancedness in this setting is strictly better than $1-1/\sqrt{k+3}$, and strictly worse than $1 - e^{-k} k^k/k!$. In contrast, for the case where the elements arrive in a random fashion, it has recently been shown that the optimal balancedness is $1 - e^{-k} k^k/k!$  \cite{arnosti2021tight}.

\subsection{Our contributions}

Our main result is to provide a simple, explicit, and optimal monotone CR scheme for the uniform matroid of rank $k$ on $n$ elements, with a balancedness of $c(k,n):=1 - \binom{n}{k}\left(1-\frac{k}{n}\right)^{n+1-k}\left(\frac{k}{n}\right)^k$. This result is encapsulated in Theorem~\ref{thm_optimal_crs_ukn} (balancedness), Theorem~\ref{thm_optimality} (optimality), and Theorem~\ref{thm_monotonicity_cr_scheme_ukn} (monotonicity). This generalizes the balancedness factor of $1 - (1-1/n)^n$  given in \cite{approximation_allocation,submodular_welfare} for the rank one (i.e., $k=1$) case. Moreover, for a fixed value of $k$, we have that $c(k,n)$ converges from above to $1 - e^{-k} k^k/k!$. While it is possible to prove the existence of a ($1 - e^{-k} k^k/k!$)-balanced CR scheme by combining results from \cite{crs,yan2011mechanism}, these require defining an exponential-sized linear program and using its dual. In addition, to turn this existence proof into an actual algorithm, one needs to use random sampling and the ellipsoid method. The advantage of our CR scheme is thus that it is a very simple and explicit procedure. Moreover, our balancedness is an explicit formula which also depends on $n$ (the number of elements) in addition to $k$, and $c(k,n) > 1 - e^{-k} k^k/k!$ for every fixed $n$.
We also discuss how the above CR scheme for uniform matroids naturally generalizes to partition matroids. 
%If we denote by $c(k,n)$ the optimal balancedness for uniform matroids of rank $k$, then the balancedness we get for partition matroids with blocks $D_i$ and capacities $d_i$ is $\min_i c(d_i, |D_i|)$, and this is optimal.

\subsection{Preliminaries on matroids}
\label{sec:matroids}

This section provides a brief background on matroids. A \emph{matroid} $\mathcal{M}$ is a pair $(N,\I)$ consisting of a ground set $N$ and a non-empty family of \emph{independent sets} $\I \subseteq 2^N$ which satisfy:
\begin{itemize}
	%\item $\I \neq \emptyset$.
	\item If $A \in \I$ and $B \subseteq A$, then $B \in \I$.
	\item If $A \in \I$ and $B \in \I$ with $|A| > |B|$, then $\exists \: i\in A\backslash B$ such that $B \cup \{i\} \in \I$.
\end{itemize}

\noindent Given a matroid  $\mathcal{M}=(N,\I)$ its \emph{rank function} $r:2^N \to \mathbb{R}_{\geq 0}$ is defined as $r(A)=\max\{|S|: S \subseteq A,\; S\in \mathcal{I}\}$. Its \emph{matroid polytope} is given by
$P_{\I} := \text{conv} (\{\mathbf{1}_S : S \in \I\}) = \{x \in \mathbb{R}^N_{\geq 0}: x(A) \leq r(A), \; \forall A \subseteq N\}$,
where $x(A) := \sum_{i \in A}x_i$. Note that this implies $x_i \in [0,1]$ for every $i \in N$.

The next two classes of matroids are of special interest for this work.

\begin{example}[Uniform matroid]
	\label{ex:uniform}
	The uniform matroid of rank $k$ on $n$ elements $U^k_n:= (N, \I)$ is the matroid whose independent sets are all the subsets of the ground set of cardinality at most $k$. That is, $\I := \{A \subseteq N: |A| \leq k\}$. Its matroid polytope is $P_{\I} = \{x \in [0,1]^N: x(N) \leq k\}$.
\end{example}

\begin{example}[Partition matroid]
	\label{ex:partition}
	Partition matroids are a generalization of uniform matroids, where the ground set is partitioned into $k$ blocks: $N = D_1 \uplus \dots \uplus D_k$ and each block $D_i$ has a certain capacity $d_i \in \mathbb{Z}_{\geq 0}$. The independent sets are then defined to be $\I:= \{A \subseteq N: |A \cap D_i| \leq d_i, \; \forall i \in \{1,\dots,k\}\}$. 
	The matroid polytope in this case is $P_{\I} = \{x \in [0,1]^N: x(D_i) \leq d_i, \; \forall i\in \{1,\dots,k\}\}$.
	The uniform matroid $U^k_n$ is simply a partition matroid with one block $N$ and one capacity $k$. Moreover, the restriction of a partition matroid to each block $D_i$ is a uniform matroid of rank $d_i$ on the ground set $D_i$.
\end{example}

\section{An optimal monotone contention resolution scheme for uniform matroids}
\label{sec:uniform}

%\subsection{The CR scheme for $U^k_n$}

We assume throughout this whole section that $n \geq 2$ and that $k \in \{1,\dots,n-1\}$. 
We denote by $P_{\I}$ the matroid polytope of $U_n^k$.
For any point $x \in P_{\I}$, let $R(x)$ be the random set satisfying $\mathbb{P}[i \in R(x)] = x_i$ independently for each coordinate. 
%\todo{Have discussion for marginals (currently in Section 2.7) somewhere here}
If the size of $R(x)$ is at most $k$, then $R(x)$ is already an independent set and the CR scheme returns it. If however $|R(x)| > k$, then the CR scheme returns a random subset of $k$ elements by making the probabilities of each subset of $k$ elements depend linearly on the coordinates of the original point $x \in P_{\I}$.
More precisely, given an arbitrary $x \in P_{\I}$, for any set $A \subseteq\text{supp}(x)$ with $|A| > k$ and any subset $B \subseteq A$ of size $k$, we define
\begin{equation}
	\label{q_A_B}
	q_A(B):= \frac{1}{\mbinom{|A|}{|B|}} \Big(1 + \: \bar{x}(A\setminus B) - \bar{x}(B) \Big),
\end{equation}
where we use the following notation:
$\bar{x}(A):= \frac{1}{|A|}x(A) = \frac{1}{|A|}\sum_{i \in A}x_i$.
We then define a randomized CR scheme $\pi$ for $U^k_n$ as follows.

\begin{algo}[CR scheme $\pi$ for $U^k_n$]
	\label{algo_optimal_crs_ukn}
	
	We are given a point $x\in P_{\I}$ and a set $A \subseteq \text{supp}(x)$. \begin{itemize}
		\item If $|A| \leq k$, then $\pi_x(A) = A$.
		\item If $|A| > k$, then for every $B \subseteq A$ with $|B| = k, \; \pi_x(A) = B$ with probability $q_A(B)$.
	\end{itemize}
\end{algo}
\noindent The above procedure can be implemented in $O(n^k)$ time in the worst case, and hence gives a polynomial time algorithm for constant values of $k$. In Section~\ref{sec:efficient} we discuss an alternative viewpoint of the scheme, which yields a more efficient implementation and a polynomial time algorithm even for non-constant $k$.

We next show that the above CR scheme is well-defined, i.e., that $q_A$ is a valid probability distribution.

\begin{lemma}
	The above procedure $\pi$ is a well-defined CR scheme. That is, $\forall x\in P_{\I}$ and $A \subseteq \text{\emph{supp}}(x)$, we have $q_A(B) \geq 0$ and $\sum_{B \subseteq A, |B|=k} q_A(B) = 1$.
\end{lemma}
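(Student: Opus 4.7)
The plan is to verify the two required properties of $q_A$ separately. Throughout, let $a := |A|$ and note that $|A \setminus B| = a - k > 0$ since the algorithm only invokes $q_A$ when $|A| > k$, so $\bar{x}(A \setminus B)$ is well-defined.

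For non-negativity, I would argue directly from the definitions. Since each coordinate $x_i \in [0,1]$, the average $\bar{x}(B) = \frac{1}{k}\sum_{i \in B} x_i$ lies in $[0,1]$, and $\bar{x}(A \setminus B) \geq 0$. Hence $1 - \bar{x}(B) \geq 0$ and adding the non-negative quantity $\bar{x}(A \setminus B)$ preserves non-negativity, so $q_A(B) \geq 0$.

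For the fact that the probabilities sum to $1$, I would compute
\begin{equation*}
\sum_{\substack{B \subseteq A \\ |B|=k}} q_A(B) = \frac{1}{\binom{a}{k}} \left[ \binom{a}{k} + \sum_{\substack{B \subseteq A \\ |B|=k}} \bar{x}(A \setminus B) - \sum_{\substack{B \subseteq A \\ |B|=k}} \bar{x}(B) \right],
\end{equation*}
and show the two inner sums are equal. Swapping the order of summation, each element $i \in A$ contributes to $\bar{x}(B)$ for exactly those $k$-subsets containing it (there are $\binom{a-1}{k-1}$ such subsets), with weight $x_i / k$; and contributes to $\bar{x}(A \setminus B)$ for exactly those $k$-subsets missing it (there are $\binom{a-1}{k}$ such subsets), with weight $x_i / (a - k)$. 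Therefore
\begin{equation*}
\sum_{B} \bar{x}(B) = \frac{\binom{a-1}{k-1}}{k} \sum_{i \in A} x_i, \qquad \sum_{B} \bar{x}(A \setminus B) = \frac{\binom{a-1}{k}}{a-k} \sum_{i \in A} x_i.
\end{equation*}
The identity $\binom{a-1}{k-1}/k = (a-1)!/(k!\,(a-k)!) = \binom{a-1}{k}/(a-k)$ shows the two sums coincide, so the bracket reduces to $\binom{a}{k}$ and the total probability is $1$.

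There is no real obstacle here: the only thing to watch out for is the edge case $|A \setminus B| = 0$ in the definition of $\bar{x}(A \setminus B)$, which does not arise because $\pi_x$ only calls $q_A$ when $|A| > k$. The argument is essentially a double-counting identity combined with the crude bound $x_i \leq 1$.
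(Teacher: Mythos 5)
Your proof is correct and follows essentially the same route as the paper: the same observation that $\bar{x}(B),\bar{x}(A\setminus B)\in[0,1]$ gives non-negativity, and the same double-counting (swap of summation order) underlies the sum-to-one claim. The only cosmetic difference is that you evaluate $\sum_B \bar{x}(B)$ and $\sum_B \bar{x}(A\setminus B)$ separately and match them via $\binom{a-1}{k-1}/k=\binom{a-1}{k}/(a-k)$, whereas the paper writes $x(A\setminus B)=x(A)-x(B)$ and applies the single identity $\sum_{B}x(B)=\binom{|A|-1}{k-1}x(A)$.
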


\begin{proof}
	Since $\bar{x}(A \setminus B) \in [0,1]$ and $\bar{x}(B) \in [0,1]$, it directly follows from the definition \eqref{q_A_B} that $q_A(B) \geq 0$.
	In order to prove the second claim, we need the equality
	\begin{equation}
		\label{eq_sum_x_B}
		\begin{split}
			\sum_{B \subseteq A, |B|=k}x(B) = \binom{|A|-1}{k-1} \; x(A)
		\end{split}
	\end{equation}
	that we derive the following way:
	\begin{align*}
		\sum_{B \subseteq A, |B|=k}x(B) &= \sum_{B \subseteq A,|B|=k} \; \sum_{i \in A}x_i \mathbf{1}_{\{i \in B\}} = \sum_{i \in A} x_i \sum_{B \subseteq A, |B|=k} \mathbf{1}_{\{i \in B\}} \\&= \sum_{i \in A} x_i \; |\{B \subseteq A \mid |B|=k, i \in B\}| = \binom{|A|-1}{k-1} \; x(A).
	\end{align*}
	
	\noindent Hence,
	\begin{align*}
		\sum_{B \subseteq A, |B|=k} q_A(B) &= \sum_{B \subseteq A, |B|=k} \frac{1}{\mbinom{|A|}{k}}\left(1 + \frac{x(A\setminus B)}{|A| - k} \:  - \frac{x(B)}{k} \right)\\
		&= 1 + \frac{1}{\mbinom{|A|}{k}}\sum_{B \subseteq A, |B| = k} \left(\frac{x(A)}{|A|-k} - \frac{x(B)}{|A|-k}-\frac{x(B)}{k} \right)\\
		%&= 1 + \frac{1}{\mbinom{|A|}{k}}\sum_{B \subseteq A, |B| = k} \left(\frac{x(A)}{|A|-k} - \frac{|A| \: x(B)}{k\:(|A|-k)} \right)\\
		&= 1 + \frac{1}{\mbinom{|A|}{k} \: \Big(|A|-k\Big)}\sum_{B \subseteq A, |B| = k} \left(x(A) - \frac{|A|}{k}x(B) \right)\\
		&= 1 + \frac{1}{\mbinom{|A|}{k} \: \Big(|A|-k\Big)} \left( \mbinom{|A|}{k}x(A) - \mbinom{|A|}{k}x(A) \right) = 1. \qedhere
	\end{align*}
\end{proof}

We now state our main result.

\begin{theorem}
	\label{thm_optimal_crs_ukn}
	Algorithm \ref{algo_optimal_crs_ukn} is a $c$-balanced CR scheme for the uniform matroid of rank $k$ on $n$ elements, where $c = 1 - \binom{n}{k}\:\left(1-\frac{k}{n}\right)^{n+1-k}\:\left(\frac{k}{n}\right)^k$.
\end{theorem}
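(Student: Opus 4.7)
The plan is to reduce the balancedness inequality to a one-parameter optimization, then conclude by a direct binomial computation. By Lemma~\ref{lemma_prob_e_in_pi_x_A}, for any $e \in \text{supp}(x)$ and letting $R := R(x) \setminus \{e\}$ (a product of independent Bernoullis on $N \setminus \{e\}$),
\[
G_e(x) \;:=\; \mathbb{P}[e \in \pi_x(R(x)) \mid e \in R(x)] \;=\; \mathbb{P}[|R| \leq k-1] + \mathbb{E}\!\left[\mathbf{1}_{|R| \geq k}\,\frac{k - x_e + \bar{x}(R)}{|R|+1}\right].
\]
The task is to show $G_e(x) \geq c(k,n)$ for every $x \in P_{\I}$ with $x_e > 0$.

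I would first evaluate $G_e$ at the symmetric point $x^\star := (k/n)\mathbf{1}$. There $\bar{x}(R) = k/n = x_e$, so the expression collapses to $\mathbb{E}[\min(1,\,k/(|R|+1))]$ with $|R| \sim \text{Bin}(n-1,\,k/n)$. Splitting the sum at $|R|=k$, applying the identity $\frac{1}{r+1}\binom{n-1}{r} = \frac{1}{n}\binom{n}{r+1}$, and simplifying via Pascal's rule yields
\[
G_e(x^\star) \;=\; 1 - \binom{n-1}{k}(k/n)^k(1-k/n)^{n-k} \;=\; 1 - \binom{n}{k}(k/n)^k(1-k/n)^{n+1-k} \;=\; c(k,n),
\]
the second equality using $\binom{n-1}{k} = \binom{n}{k}(n-k)/n$.

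The heart of the proof is showing that $x^\star$ is a minimizer of $G_e$ on $P_{\I}$. The plan is a two-step reduction. First, since $G_e$ is strictly decreasing in $x_e$ (its coefficient is $-\mathbb{E}[\mathbf{1}_{|R| \geq k}/(|R|+1)] < 0$) and, by an analogous argument, nonincreasing in each $x_j$ for $j \neq e$, the minimum is attained on the boundary $\sum_i x_i = k$. Second, fixing $x_e$ and $\sum_{j \neq e} x_j = k - x_e$, I would argue that $G_e$ is Schur-convex in $(x_j)_{j \neq e}$, so the minimum on that slice lies at the symmetric point $x_j = (k-x_e)/(n-1)$. Writing $q := (k-x_e)/(n-1)$ and repeating the binomial manipulation of the previous paragraph then gives the clean one-variable formula
\[
G_e(x) \;=\; 1 - \binom{n-1}{k}\, q^k (1-q)^{n-k},
\]
which is minimized over $q \in [(k-1)/(n-1),\,k/(n-1)]$ at $q = k/n$ (since $q^k(1-q)^{n-k}$ attains its unique maximum there), recovering $x = x^\star$ and the value $c(k,n)$.

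The main obstacle is the Schur-convexity step. Under a symmetric swap $x_i \mapsto x_i + t,\; x_j \mapsto x_j - t$ with $i, j \neq e$, both the distribution of $R$ (through the Bernoulli factors) and the payoff $h(R, x) := (k - x_e + \bar{x}(R))/(|R|+1)$ (through $\bar{x}(R)$) depend on $t$, so one must verify $\partial^2 G_e/\partial t^2\big|_{t=0} \geq 0$. The cleanest route is to split the expectation over the four configurations of $(\mathbf{1}_{i \in R}, \mathbf{1}_{j \in R})$, exploit that $\mathbb{P}[R=S]$ is bilinear in $(x_i, x_j)$ while $h(S,x)$ is affine in each, and collect the coefficients of $t^2$ after averaging over the remaining $2^{n-3}$ configurations of the other Bernoullis.
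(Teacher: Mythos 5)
Your reduction has two gaps: one step that is outright false, and one unproven step that is in fact the heart of the theorem. The false step is the claim that $G_e$ is nonincreasing in each $x_j$ for $j \neq e$. Already for $n=2$, $k=1$ with $N=\{e,j\}$ one has $G_e(x) = (1-x_j) + x_j\,\frac{1+x_j-x_e}{2}$, hence $\partial G_e/\partial x_j = \frac{2x_j - x_e - 1}{2} > 0$ at, say, $(x_e,x_j)=(0.05,\,0.9) \in P_{\I}$: the scheme's bias against elements with large $x$-value can outweigh the extra contention they create. Consequently your first reduction does not place the minimizer on the face $x(N)=k$: using only the (correct) monotonicity in $x_e$, inside $P_{\I}$ you can get stuck at $x_e=1$ with $x(S) < k-1$ when $k \geq 2$. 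This is repairable --- drop the constraint $x_e \leq 1$ as in the paper's $\widetilde{P}_{\I}$ (the expression $G_e$ is a polynomial, and enlarging the feasible set only lowers the minimum), after which $x_e$-monotonicity alone lets you set $x_e = k - x(S)$ --- but as written the step fails.

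The second gap is that the Schur-convexity of $G_e$ in $(x_j)_{j\neq e}$ on the slice $\{x_e \text{ fixed},\ x(S)=k-x_e\}$ is precisely where all the difficulty of the theorem lives, and you only sketch how one might organize the verification rather than carrying it out. (Note also that checking $\partial^2 G_e/\partial t^2 \geq 0$ only at $t=0$ is not the right criterion; you need the Schur--Ostrowski condition $(x_i-x_j)(\partial_i G_e - \partial_j G_e)\geq 0$ at every point of the slice, or convexity along the entire transfer segment.) Everything downstream of these two steps checks out: the marginal-based rewriting of $G_e$, the evaluation at $(k/n,\dots,k/n)$, the one-variable formula $1-\binom{n-1}{k}q^k(1-q)^{n-k}$ on the symmetric slice, and its optimization at $q=k/n$ are all correct, and this endgame is genuinely slicker than the paper's. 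But the paper deliberately avoids confronting the full function $G_e$: after the $x_e$-step it uses the identity $x_i\,p_S(A)=(1-x_i)\,p_S(A\cup i)$ to telescope the objective down to $\sum_{A\subseteq S,\,|A|=k}p_S(A)\big(1-\bar{x}(A)\big)$, and only then proves that the symmetric point globally optimizes this much simpler function over the whole hypercube, via a critical-point analysis plus induction on the boundary. Some argument of that depth is still missing from your plan, so the proposal is incomplete as it stands.
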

\noindent Since we use the above expression often throughout this section, we denote it by
\begin{equation*}
	%\label{eq_definition_of_c_k_n}
	c(k,n):= 1 - \binom{n}{k}\:\left(1-\frac{k}{n}\right)^{n+1-k}\:\left(\frac{k}{n}\right)^k.
\end{equation*}
\noindent We note that setting $k = 1$ gives $c(1,n) = 1 - (1-1/n)^n$, which matches the optimal balancedness for $U^1_n$ provided in \cite{approximation_allocation, submodular_welfare}. 
This converges to $1-1/e$ when $n$ gets large.

%Table~\ref{table_balancedness_crs_uniform_matroids} shows numerical values for the balancedness $c(k,n)$.

\begin{comment}
\begin{table}[ht]
	\center
	\begin{tabular}{|c|| c| c |c |c |c |c| c|}
		\hline
		$n \setminus k$ & 1 & 2 & 3 & 4 & 9 & 99 & 999\\
		\hline
		\hline
		2 & 0.75 &  &  &  &  &  & \\
		\hline
		3 & 0.704 & 0.852 &  &  &  &  & \\
		\hline
		4 & 0.684 & 0.813 & 0.895 &  &  &  & \\
		\hline
		5 & 0.672 & 0.793 & 0.862 & 0.918 &  &  & \\
		\hline
		10 & 0.651 & 0.759 & 0.813 & 0.850 & 0.961 &  & \\
		\hline
		100 & 0.633 & 0.732 & 0.779 & 0.810 & 0.874 & 0.996 & \\
		\hline
		1000 & 0.632 & 0.730 & 0.776 & 0.809 & 0.869 & 0.962 & 0.999\\
		\hline
	\end{tabular}
	\caption{Numerical values for the balancedness $c(k,n)$ of Theorem \ref{thm_optimal_crs_ukn}}
	\label{table_balancedness_crs_uniform_matroids}
\end{table}
\end{comment}

\begin{proposition}
	For a fixed $k$, the limit of $c(k,n)$ as $n$ tends to infinity is
	\[\lim_{n \to \infty} c(k,n) = 1 - e^{-k} \: \frac{k^k}{k!}.\]
	Moreover, $c(k,n)$ is monotonically decreasing with $n$ for a fixed $k$.
\end{proposition}
\begin{proof}
	We use Stirling's approximation, which states that:
	\begin{equation}
		\label{eq_stirling_approx}
		n! \sim \sqrt{2\pi n} \left(\frac{n}{e}\right)^n.
	\end{equation}
	This means that these two quantities are asymptotic, i.e., their ratio tends to 1 if we tend $n$ to infinity. By \eqref{eq_stirling_approx}, we get
	\begin{equation*}
		%\label{eq_stirling_our_case}
		\frac{n!}{(n-k)!} \sim \sqrt{2 \pi n}\left(\frac{n}{e}\right)^n \frac{1}{\sqrt{2 \pi (n-k)}}\left(\frac{e}{n-k}\right)^{n-k} =  e^{-k}  \frac{n^n}{(n-k)^{n-k}}\sqrt{\frac{n}{n-k}}.
	\end{equation*}
	Using the above expression leads to the desired result:
	\begin{align*}
		1 - c(k,n) &= \binom{n}{k}\:\left(1-\frac{k}{n}\right)^{n+1-k}\:\left(\frac{k}{n}\right)^k 
		=\frac{k^k}{k!} \; \frac{n!}{(n-k)!} \; \frac{(n-k)^{n+1-k}}{n^{n+1}} \\
		&\sim e^{-k}\; \frac{k^k}{k!} \; \frac{n-k}{n} \sqrt{\frac{n}{n-k}} 
		= e^{-k}\; \frac{k^k}{k!} \; \sqrt{\frac{n-k}{n}}
		\sim e^{-k}\; \frac{k^k}{k!}.
	\end{align*}
In order to prove that $c(k,n)$ is monotonically decreasing with $n$, we now show that
\begin{equation}
\label{eq_ckn_decreasing_n}
\frac{1-c(k,n+1)}{1-c(k,n)} > 1.
\end{equation}
By expanding this expression, we get
\begin{align*}
\frac{1-c(k,n+1)}{1-c(k,n)} &= \frac{\binom{n+1}{k}}{\binom{n}{k}}\; \frac{\big(1 - k/(n+1)\big)^{n+2-k}}{\big(1 - k/n\big)^{n+1-k}} \; \frac{(k/(n+1))^{k}}{(k/n)^k} \\ &= \left(\frac{n}{n+1}\right)^{n+1} \left(\frac{n+1-k}{n-k}\right)^{n+1-k} = \quad \frac{g(n+1-k)}{g(n+1)},
\end{align*}
where the function $g(x)$ is defined as $g(x) := (x/(x-1))^x$. In order for \eqref{eq_ckn_decreasing_n} to hold, it now suffices to show that $g(x)$ is monotonically decreasing for $x > 1$. We do that by showing that the derivative of the logarithm is strictly negative. 
\begin{align*}
\frac{d}{dx} \log(g(x)) &= \frac{d}{dx} x \log\left(\frac{x}{x-1}\right) = \log\left(\frac{x}{x-1}\right) + x \left(\frac{1}{x} - \frac{1}{x-1}\right) = \log\left(\frac{x}{x-1}\right) - \frac{1}{x-1}\\
  		&= \log\left(1 + \frac{1}{x-1}\right)-\frac{1}{x-1} < 0,
\end{align*}
where the last inequality follows from the fact that $\log(1+y) < y$ for any $y > 0$. This shows that \eqref{eq_ckn_decreasing_n} holds and hence that $c(k,n)$ is monotonically decreasing with $n$. \qedhere
\end{proof}

\subsection{Outline of the proof of Theorem \ref{thm_optimal_crs_ukn}}
\label{subsection_outline_proof}

Throughout this whole section on uniform matroids, we fix an arbitrary element $e \in N$. In order to prove Theorem \ref{thm_optimal_crs_ukn}, we need to show that for every $x \in P_{\I}$ with $x_e > 0$ we have $\mathbb{P}[e \in \pi_x(R(x)) \mid e \in R(x)] \geq c(k,n)$.
This is equivalent to showing that for every $x \in P_{\I}$ with $x_e > 0$ we have
\begin{equation}
	\label{eq_proba_e_not_in_pi_x_R_x}
	\mathbb{P}[e \notin \pi_x(R(x)) \mid e \in R(x)] \leq 1 - c(k,n).
\end{equation}

We now introduce some definitions and notation that will be needed.
For any $B \subseteq A \subseteq N$, let
$p_A(B) := \mathbb{P}[R_A(x) = B] = \prod_{i \in B}x_i \prod_{i \in A \setminus B} (1-x_i)$, where $R_A(x)$ is the random set obtained by rounding each coordinate of $x|_A$ in the reduced ground set $A$ to one independently with probability $x_i$.
That is, $R_A(x)=R(x)\cap A$.
Note that $p_N(B) = \mathbb{P}[R(x) = B]$. We do not write the dependence on $x\in P_{\I}$ for simplicity of notation.
We mainly work on the set $N \setminus \{e\}$. For this reason, we define $S := N \setminus \{e\}$.
Note that $|S| = n-1$; we use this often in our arguments.

With the above notation we can rewrite the probability in \eqref{eq_proba_e_not_in_pi_x_R_x} in a more convenient form. For any $x \in P_{\I}$ satisfying $x_e > 0$, we get
\begin{align}
	\begin{split}
		\mathbb{P}\Big[e \notin \pi_x(R(x)) \mid e \in R(x)\Big] &= \sum_{A \subseteq S}\mathbb{P}[e \notin \pi_x(R(x)) \mid R_S(x) = A, e \in R(x)]\;\mathbb{P}[R_S(x)=A \mid e \in R(x)]\\
		&= \sum_{A \subseteq S, |A| \geq k}\mathbb{P}\Big[e \notin \pi_x(R(x)) \mid R(x)=A \cup e\Big] \; p_S(A) \nonumber\\
		&= \sum_{A \subseteq S, |A| \geq k} p_S(A) \sum_{B \subseteq A, |B|=k} q_{A \cup e}(B). \nonumber\\
	\end{split}
\end{align}
The obtained expression is a multivariable function of the variables $x_1,\dots,x_n$, since $p_S(A)$ and $q_{A \cup e}(B)$ depend on those variables as well. We denote it as follows.
%\begin{notation}
\begin{equation}
	\label{eq_def_G_e}
	G(x) := \sum_{A \subseteq S, |A| \geq k} p_S(A) \sum_{B \subseteq A, |B|=k} q_{A \cup e}(B).
\end{equation}
%\end{notation}
\noindent One then has that for proving Theorem \ref{thm_optimal_crs_ukn} it suffices to show the following.
\begin{theorem}
	\label{thm_max_crs_ukn}
	Let $G(x)$ and $c(k,n)$ be as defined above. Then $\max_{x \in P_{\I}} G(x) = 1 - c(k,n)$. Moreover, the maximum is attained at the point $(x_1,\dots,x_n) = (k/n,\dots,k/n) \in P_{\I}.$
\end{theorem}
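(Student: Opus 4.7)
The plan is to first use Lemma~\ref{lemma_prob_e_in_pi_x_A} to simplify the inner sum. Since $\sum_{B \subseteq A,|B|=k} q_{A\cup e}(B) = 1 - \mathbb{P}[e \in \pi_x(A\cup e)]$, the lemma gives
\[
G(x) = \sum_{A \subseteq S,\, |A|\geq k} p_S(A)\left(1 - \frac{k - x_e}{|A|+1} - \frac{x(A)}{|A|(|A|+1)}\right).
\]
The key structural feature is that $G$ is symmetric in the $n-1$ coordinates $\{x_i\}_{i \in S}$, while $x_e$ plays a distinguished role. The strategy is to first use this symmetry to reduce the problem to two variables and then solve the resulting planar optimization.

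For the reduction I would use a Schur-concavity style argument. Fix any two indices $i, j \in S$ and restrict $G$ to the line $(x_i, x_j) \mapsto (x_i + t, x_j - t)$, freezing all other coordinates. Symmetry in $i \leftrightarrow j$ makes this restriction symmetric in $t$ about the midpoint; if I can verify it is concave in $t$ (the second derivative decomposes according to $|A \cap \{i,j\}| \in \{0,1,2\}$ and should reduce to an inequality among easily comparable terms), then averaging two $S$-coordinates weakly increases $G$. Iterating this yields a maximizer of the form $x_i = y$ for all $i \in S$ with $x_e = z$ free, subject to $z + (n-1)y \leq k$ and $y, z \in [0,1]$.

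On this slice the expression collapses to
\[
G(z, y) = \sum_{m=k}^{n-1} \binom{n-1}{m}\, y^m(1-y)^{n-1-m}\left(1 - \frac{k - z + y}{m+1}\right),
\]
and a direct inspection shows $\partial G / \partial z > 0$, so the knapsack constraint binds and I may substitute $z = k - (n-1)y$; this makes $k - z + y = ny$. Using the identity $\binom{n-1}{m}/(m+1) = \binom{n}{m+1}/n$ to re-index, the two resulting sums combine as $\mathbb{P}[X' \geq k] - \mathbb{P}[X \geq k+1]$ where $X' \sim \mathrm{Bin}(n-1, y)$ and $X = X' + \mathrm{Bern}(y) \sim \mathrm{Bin}(n, y)$ (independent), which telescopes to the clean one-variable formula
\[
G(y) = (1-y)\,\mathbb{P}[X' = k] = \binom{n-1}{k}\, y^k(1-y)^{n-k}.
\]
This function is single-peaked with unique interior maximum at $y = k/n$ (which lies in the feasible range $[(k-1)/(n-1),\,k/(n-1)]$ and gives $z = k/n$), so the optimum is $x = (k/n, \ldots, k/n)$ with value $\binom{n-1}{k}(k/n)^k(1-k/n)^{n-k}$. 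Using $\binom{n-1}{k} = \binom{n}{k}(n-k)/n$, this equals $\binom{n}{k}(k/n)^k(1-k/n)^{n+1-k} = 1 - c(k,n)$, as desired.

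The main obstacle I foresee is the Schur-concavity check in the reduction step. Although the overall structure is clean and the binomial telescoping in the final step drops out neatly once the substitution $z = k - (n-1)y$ is made, verifying that averaging two $S$-coordinates does not decrease $G$ requires tracking second derivatives of the product $p_S(A)$ and the linear term $x(A)/|A|$ simultaneously across all $A \subseteq S$. Finding the right organization of this calculation, or an alternative route such as exhibiting $G$ as a specific polynomial in the elementary symmetric functions of $x_S$, is the technical heart of the proof.
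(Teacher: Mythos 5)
Your endgame is correct and verifiable: the expression for $G$ via Lemma~\ref{lemma_prob_e_in_pi_x_A}, the monotonicity in $z=x_e$, the harmlessness of relaxing $z\le 1$ when substituting $z=k-(n-1)y$ (this is exactly the role of the paper's relaxed polytope $\widetilde{P}_{\I}$), the binomial telescoping to $G(y)=\binom{n-1}{k}y^k(1-y)^{n-k}$, and the one-variable maximization at $y=k/n$ all check out, and the telescoping is in fact the symmetric-slice shadow of the cancellation that the paper's Lemma~\ref{technical_lemma} carries out for arbitrary $x$ (all terms with $|A|>k$ vanish after setting $x_e=k-x(S)$).

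However, there is a genuine gap, and you name it yourself: the reduction to the symmetric slice $x_i=y$ for all $i\in S$ is the entire technical content of the theorem, and your proposal only gestures at a Schur-concavity argument without carrying it out. Two concerns beyond "it is unfinished." First, concavity of $G$ (with $x_e$ frozen) along every exchange segment $(x_i+t,\,x_j-t)$ is a strictly stronger property than the statement you need (that some maximizer is symmetric); it may fail on parts of the cube even though the theorem is true, so the proposed route could break rather than merely require bookkeeping. Second, even if it holds, verifying it means controlling a degree-$4$ polynomial in $t$ whose coefficients aggregate over all $A\subseteq S$ with $|A|\ge k$, split by $|A\cap\{i,j\}|$ --- you have not identified how these terms compare. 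The paper avoids this entirely: it first eliminates $x_e$ for \emph{arbitrary} $x_S$ (Lemma~\ref{technical_lemma}), reducing to the much simpler function $h_S^k(x)=\sum_{A\subseteq S,|A|=k}p_S(A)(k-x(A))$ on the cube $[0,1]^S$, and then proves the symmetric point is the maximizer by a first-order argument (the gradient computation in Proposition~\ref{prop_interior} shows $(k/n,\dots,k/n)$ is the unique interior critical point) combined with an induction on $n$ over the cube boundary, where $h_S^k$ restricts to $h_{S\setminus i}^k$ or $h_{S\setminus i}^{k-1}$ and Lemma~\ref{lemma_inductive_step} compares the resulting values. If you want to salvage your plan, do the $x_e$-elimination first (your own formula shows the coefficient of $x_e$ in $G$ is $\sum_{A}p_S(A)/(|A|+1)\ge 0$ for all $x_S$, not just on the slice) and then attack the symmetrization question for $h_S^k$ rather than for $G$; but as written, the proof is incomplete at its central step.
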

\noindent Indeed, Theorem \ref{thm_max_crs_ukn} implies that for every $x \in P_{\I}$ we have $G(x) \leq 1-c(k,n)$,
with equality holding if $x = (k/n,\dots,k/n)$. In particular, for any $x \in P_{\I}$ satisfying $x_e > 0$, we get
\[
G(x) = \mathbb{P}\Big[e \notin \pi_x(R(x)) \mid e \in R(x)\Big] \leq 1 - c(k,n),
\]
which proves Theorem \ref{thm_optimal_crs_ukn} by 
\eqref{eq_proba_e_not_in_pi_x_R_x}.

Notice that for the conditional probability to be well defined, we need the assumption that $x_e > 0$. However, in our case $G(x)$ is simply a multivariable polynomial function of the $n$ variables $x_1,\dots,x_n$ and is thus also defined when $x_e=0$. We may therefore forget the conditional probability and simply treat Theorem \ref{thm_max_crs_ukn} as a multivariable maximization problem over a bounded domain. We now state the outline of the proof for Theorem \ref{thm_max_crs_ukn}.

We first maximize $G(x)$ over the variable $x_e$, and get an expression depending only on the $x$-variables in $S$. This is done in Section \ref{subsection_maximizing_variable_x_e}.
We then maximize the above expression over the unit hypercube $[0,1]^S$ (see Section \ref{subsection_maximizing_h_S_k}).
Finally, we combine the above two results to show that the maximum in Theorem \ref{thm_max_crs_ukn} is attained at the point $x_i=k/n$ for every $i \in N$; this is done in Section \ref{subsection_proof_of_main_theorem_uniform}.

\subsection{Maximizing over the variable $x_e$}
\label{subsection_maximizing_variable_x_e}

The matroid polytope of $U^k_n$ is given by 
$P_{\I} = \{x \in [0,1]^N: x(N) \leq k\}$.
We define a new polytope by removing the constraint $x_e \leq 1$ from $P_{\I}$:
\begin{equation*}
	%\label{eq_modified_polytope_P_tilde}
	\widetilde{P}_{\I}:= \{x \in \mathbb{R}^N_{\geq 0}: x(N) \leq k \text{ and } x_i \leq 1 \;\; \forall i \in S\}.
\end{equation*}
Clearly, $P_{\I} \subseteq \widetilde{P}_{\I}$.

We now present the main result of this section, where we consider the maximization problem $\max\{G(x) \mid x \in \widetilde{P}_{\I}\}$ and maximize $G(x)$ over the variable $x_e$ while keeping all the other variables ($x_i$ for every $i \in S$) fixed to get an expression depending only on the $x$-variables in $S$.

\begin{lemma}
	\label{technical_lemma}
	For every $x \in \widetilde{P}_{\I}$, 
	\begin{equation}
		\label{technical_equation_bound}
		G(x) \leq \sum_{A \subseteq S, |A|=k} p_S(A) \Big(1-\bar{x}(A)\Big).
	\end{equation}
	Moreover, equality holds when $x_e = k - x(S)$.
\end{lemma}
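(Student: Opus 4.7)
The key observation is that $G(x)$ is affine in the single variable $x_e$ when the remaining coordinates are held fixed. Indeed, inspecting \eqref{q_A_B}, for any $B\subseteq A$ with $|B|=k$ the element $e$ lies in $(A\cup e)\setminus B$, so $q_{A\cup e}(B)$ is linear in $x_e$ with the same slope $1/\bigl(\binom{|A|+1}{k}(|A|+1-k)\bigr)$ for every such $B$, while $p_S(A)$ does not depend on $x_e$ at all. A short binomial simplification then gives
\begin{equation*}
\frac{\partial G}{\partial x_e} \;=\; \sum_{A\subseteq S,\, |A|\geq k}\frac{p_S(A)}{|A|+1} \;\geq\; 0.
\end{equation*}
Since the feasible range of $x_e$ on $\widetilde{P}_{\I}$ (with the remaining coordinates fixed) is $[0,\,k-x(S)]$, this monotonicity implies that $G$ attains its maximum at $x_e=k-x(S)$. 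Hence the inequality in \eqref{technical_equation_bound} will follow as soon as equality is verified at that extreme point.

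To evaluate $G$ at $x_e=k-x(S)$, I would apply Lemma~\ref{lemma_prob_e_in_pi_x_A} to each $A\cup e$ with $|A|\geq k$ to rewrite the inner sum as $1-\tfrac{k-x_e}{|A|+1}-\tfrac{x(A)}{|A|(|A|+1)}$. Substituting $k-x_e=x(S)=x(A)+x(S\setminus A)$ and combining fractions collapses this to the clean form
\begin{equation*}
\sum_{B\subseteq A,\, |B|=k} q_{A\cup e}(B) \;=\; 1-\bar x(A) - \frac{x(S\setminus A)}{|A|+1}.
\end{equation*}
Plugging back into the definition \eqref{eq_def_G_e} of $G$ and separating off the $|A|=k$ terms (which already match the target right-hand side), the problem reduces to proving the identity
\begin{equation*}
\sum_{A\subseteq S,\,|A|>k}p_S(A)\bigl(1-\bar x(A)\bigr) \;=\; \sum_{A\subseteq S,\,|A|\geq k}p_S(A)\,\frac{x(S\setminus A)}{|A|+1}.
\end{equation*}

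The main obstacle is this last identity, but it follows from swapping the order of summation on both sides. Writing $1-\bar x(A)=\tfrac{1}{|A|}\sum_{i\in A}(1-x_i)$ on the left and $x(S\setminus A)=\sum_{i\in S\setminus A}x_i$ on the right, and then using the standard marginalizations $\sum_{A\ni i,\,|A|=m}p_S(A)=x_i\,\mathbb{P}[|R_{S\setminus i}(x)|=m-1]$ and $\sum_{i\notin A,\,|A|=m}p_S(A)=(1-x_i)\,\mathbb{P}[|R_{S\setminus i}(x)|=m]$, both sides reduce (after a single index shift $\ell=m-1$ on the left) to the common expression
\begin{equation*}
\sum_{i\in S}x_i(1-x_i)\sum_{m\geq k}\frac{\mathbb{P}[|R_{S\setminus i}(x)|=m]}{m+1}.
\end{equation*}
This establishes equality at $x_e=k-x(S)$ and, combined with the affine monotonicity in $x_e$, completes the argument.
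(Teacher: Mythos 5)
Your proof is correct and takes essentially the same route as the paper's: you exploit linearity of $G$ in $x_e$ to reduce to the point $x_e = k - x(S)$, arrive at the same intermediate expression $\sum_{A \subseteq S,\, |A|\geq k} p_S(A)\bigl(1-\bar{x}(A) - \tfrac{x(S\setminus A)}{|A|+1}\bigr)$ as in \eqref{equation_lemma_part_three}, and collapse the $|A|>k$ terms via the same element-factoring identity $x_i\, p_S(A) = (1-x_i)\, p_S(A\cup i)$, which is exactly what your two marginalization formulas encode. The only presentational differences are that you invoke Lemma~\ref{lemma_prob_e_in_pi_x_A} to shortcut the binomial simplification that the paper carries out inline with \eqref{eq_sum_x_B}, and you verify the key identity by reducing both sides to a common symmetric expression rather than transforming one side into the other as in \eqref{middle_term_vondrak_trick}.
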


\begin{proof}
	\begin{align}
		\label{equation_lemma_part_one}
		G(x) &= \sum_{A \subseteq S, |A| \geq k} p_S(A) \sum_{B \subseteq A, |B|=k} q_{A \cup e}(B) \nonumber\\
		&= \sum_{A \subseteq S, |A| \geq k} p_S(A) \sum_{B \subseteq A, |B| = k} \frac{1}{\mbinom{|A|+1}{k}} \Big(1 + \: \bar{x}\Big((A\setminus B) \cup e)\Big) - \bar{x}(B) \Big) \nonumber\\
		&= \sum_{A \subseteq S, |A| \geq k} p_S(A) \frac{1}{\mbinom{|A|+1}{k}} \sum_{B \subseteq A, |B| = k} \left(1 + \: \frac{x(A \setminus B) + x_e}{|A|-k+1} - \frac{x(B)}{k}\right).
	\end{align}
	We now maximize this expression with respect to the variable $x_e$ over $\widetilde{P}_{\I}$ while keeping all the other variables fixed. Since this is a linear function of $x_e$ and the coefficient of $x_e$ is positive, the maximal value will be $x_e = k - x(S)$ in order to satisfy the constraint $x(N) \leq k$. Note that this was the reason for the definition of $\widetilde{P}_{\I}$, since $k-x(S)$ might not necessarily be smaller than 1. We thus plug-in $x_e = k-x(S)$ in \eqref{equation_lemma_part_one} and write an inequality to emphasize that the derivation holds for any $x \in \widetilde{P}_{\I}$.
	
	\begin{align}
		\begin{split}
			\label{equation_lemma_part_two}
			\text{\eqref{equation_lemma_part_one}} &\leq \sum_{A \subseteq S, |A| \geq k} p_S(A) \frac{1}{\mbinom{|A|+1}{k}} \sum_{B \subseteq A, |B| = k} \left(1 + \: \frac{x(A \setminus B) + k - x(S)}{|A|-k+1} - \frac{x(B)}{k}\right)\\
			&=  \sum_{A \subseteq S, |A| \geq k} p_S(A) \frac{1}{\mbinom{|A|+1}{k}} \sum_{B \subseteq A, |B| = k} \left(1 + \: \frac{k-x(S \setminus A)-x(B)}{|A|-k+1} - \frac{x(B)}{k}\right)\\
			%&=  \sum_{A \subseteq S, |A| \geq k} p_S(A) \frac{1}{\mbinom{|A|+1}{k}} \sum_{B \subseteq A, |B| = k} \left(1 + \frac{k}{|A|-k+1} - \frac{x(S \setminus A)}{|A|-k+1} - \left(\frac{1}{|A|-k+1} + \frac{1}{k}\right)x(B)\right)\\
			&= \sum_{A \subseteq S, |A| \geq k} p_S(A) \frac{1}{\mbinom{|A|+1}{k}} \sum_{B \subseteq A, |B| = k}  \left(\frac{|A|+1}{|A|-k+1} - \:\frac{x(S \setminus A)}{|A|-k+1} - \frac{|A|+1}{k(|A|-k+1)} x(B) \right).
		\end{split}
	\end{align}
	Notice the only part which depends on $B$ in the last summation is $x(B)$. By using Equation \eqref{eq_sum_x_B} and noticing that $\sum_{B \subseteq A, |B| =k} 1 = \mbinom{|A|}{k}$, we get
	
	\begin{align}
		\eqref{equation_lemma_part_two} &= \sum_{A \subseteq S, |A| \geq k} p_S(A) \frac{1}{\mbinom{|A|+1}{k}} \frac{1}{|A|-k+1}\left(\mbinom{|A|}{k}\big(|A|+1 - x(S \setminus A)\big)  - \frac{|A|+1}{k} \mbinom{|A|-1}{k-1}x(A)\right) \nonumber \\
		&= \sum_{A \subseteq S, |A| \geq k} p_S(A) \frac{1}{\mbinom{|A|+1}{k}} \frac{1}{|A|-k+1} \mbinom{|A|}{k}\left(|A|+1 - x(S \setminus A)  - \frac{|A|+1}{|A|} x(A)\right) \nonumber \\
		%&= \sum_{A \subseteq S, |A| \geq k} p_S(A) \; \frac{|A|-k+1}{|A|+1} \frac{1}{|A|-k+1} \left(|A|+1 - x(S \setminus A)  - \frac{|A|+1}{|A|} x(A)\right) \nonumber \\
		&= \sum_{A \subseteq S, |A| \geq k} \; \frac{p_S(A)}{|A|+1} \left(|A|+1 - x(S \setminus A)  - \frac{|A|+1}{|A|} x(A)\right) \nonumber \\
		&= \sum_{A \subseteq S, |A| \geq k} \; p_S(A) \left(1 - \frac{x(S \setminus A)}{|A|+1}  - \frac{x(A)}{|A|}\right). \label{equation_lemma_part_three}
	\end{align}
	\noindent Now, note that by definition of the term $p_S(A)$, we have
	\begin{equation}
		\label{vondrak_trick}
		x_i \: p_S(A) = (1-x_i) \: p_S(A \cup i) \quad \text{ for any } i \in S \setminus A.
	\end{equation}
	\noindent We compute the middle term in \eqref{equation_lemma_part_three} by plugging in \eqref{vondrak_trick} and the change of variable $B:= A \cup i$.
	
	\begin{align}
		\label{middle_term_vondrak_trick}
		\sum_{A \subseteq S, |A| \geq k} \; \frac{1}{|A|+1} &p_S(A) \: x(S \setminus A) = 
		\sum_{A \subseteq S, |A| \geq k} \; \sum_{i\in S}\; \frac{1}{|A|+1} \: x_i \: p_S(A) \: \mathbf{1}_{\{i \notin A\}} \nonumber \\
		&= \sum_{i\in S}\ \; \sum_{A \subseteq S, |A| \geq k} \; \frac{1}{|A|+1} \: (1-x_i) \: p_S(A \cup i) \: \mathbf{1}_{\{i \notin A\}} \nonumber\\
		&= \sum_{i\in S}\ \; \sum_{B \subseteq S, |B| \geq k+1} \; \frac{1}{|B|} \: (1-x_i) \: p_S(B) \: \mathbf{1}_{\{i \in B\}} \nonumber\\
		&= \sum_{B \subseteq S, |B| \geq k+1} \; \frac{1}{|B|} \: p_S(B) \sum_{i\in S}\; \mathbf{1}_{\{i \in B\}} - \sum_{B \subseteq S, |B| \geq k+1} \; \frac{1}{|B|} \: p_S(B) \sum_{i\in S}\; x_i \mathbf{1}_{\{i \in B\}} \nonumber\\
		&= \sum_{B \subseteq S, |B| \geq k+1} p_S(B) - \sum_{B \subseteq S, |B| \geq k+1}  \frac{p_S(B)}{|B|} \; x(B) \nonumber\\ 
		&= \sum_{B \subseteq S, |B| \geq k+1}  p_S(B)  \left(1-\frac{x(B)}{|B|} \right) \nonumber\\
		&= \sum_{A \subseteq S, |A| \geq k+1}  p_S(A)  \left(1-\frac{x(A)}{|A|} \right).
	\end{align}
	We finally plug-in \eqref{middle_term_vondrak_trick} into \eqref{equation_lemma_part_three} and use $\sum_{A \subseteq S, |A| \geq k} = \sum_{A \subseteq S, |A| \geq k+1} + \sum_{A \subseteq S, |A| = k}$ to get
	
	\begin{align*}
		\eqref{equation_lemma_part_three}&= \sum_{A \subseteq S, |A| = k} p_S(A)  \left(1-\frac{x(A)}{|A|} \right)
		= \sum_{A \subseteq S, |A|=k} p_S(A) \Big(1-\bar{x}(A)\Big).
	\end{align*}
	Notice that the only place where we used an inequality was from \eqref{equation_lemma_part_one} to \eqref{equation_lemma_part_two}. Hence equality holds when $x_e = k-x(S)$.
\end{proof}

\subsection{Maximizing $h_S^k: [0,1]^S \mapsto \mathbb{R}$}
\label{subsection_maximizing_h_S_k}

In this section, we turn our attention into maximizing the right-hand side expression in \eqref{technical_equation_bound} over the unit hypercube $[0,1]^S$.
In fact, we work with the following function instead:
\begin{equation*}
	%\label{eq_def_h_S_k}
	h_S^k(x):= \sum_{A \subseteq S, \,|A|=k}p_S(A)(k-x(A)).
\end{equation*}
A plot of $h_S^k(x)$ for $S = \{1,2\}$ and $k=1,2$ is presented in Figure \ref{figure_hsk}. 
Note that the above function 
%Expression \eqref{eq_def_h_S_k} 
is simply the right hand side of \eqref{technical_equation_bound} multiplied by $k$. Hence, maximizing one or the other is equivalent. 
%the optimal solution will be the same, whereas the optimal function value will be multiplied by a factor of $k$.

\iffalse
Let us give a few reminders to make this subsection self-complete.
\begin{itemize}
	\item We assume that $n \geq 2$ and $k \in \{1,\dots,n-1\}$.
	\item $S:= N \setminus \{e\}$ is the set we obtain by removing the fixed element $e \in N$ from the original ground set $N = \{1,\dots,n\}$. We may without loss of generality assume here that $S = \{1,\dots,n-1\}$.
	\item $p_S(A):= \prod_{i \in A}x_i \prod_{i \in S \setminus A}(1-x_i)$ for every $A \subseteq S$. This quantity implicitly depends on the point $x \in [0,1]^S$.
	\item $c(k,n):= 1 - \binom{n}{k}\:\left(1-\frac{k}{n}\right)^{n+1-k}\:\left(\frac{k}{n}\right)^k$ is the balancedness we are trying to show for the CR scheme described in Algorithm \ref{algo_optimal_crs_ukn}.
\end{itemize}
\fi

\begin{theorem}
	\label{thm_max_hSk}
	Let $n \geq 2$, so that $|S| = n-1 \geq 1$ and $k \in \{1,\dots,n-1\}$.
	Then the function $h_S^k(x)$
	attains its maximum over the unit hypercube $[0,1]^S$ at the point $(k/n, \dots, k/n)$ with value
	\begin{equation*}
		h_S^k\Big(k/n, \dots, k/n\Big) = k \; \binom{n}{k}\:\left(1-\frac{k}{n}\right)^{n+1-k}\:\left(\frac{k}{n}\right)^k = k\Big(1-c(k,n)\Big).
	\end{equation*}
\end{theorem}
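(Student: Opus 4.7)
The plan is to show that the uniform point $x^\star := (k/n,\dots,k/n)$ is a global maximizer of $h_S^k$ on $[0,1]^S$ and then to evaluate $h_S^k$ at it. A direct substitution gives
\[
h_S^k(x^\star) \;=\; \binom{n-1}{k}(k/n)^k(1-k/n)^{n-1-k} \cdot k\,(1-k/n),
\]
and the elementary identity $\binom{n-1}{k} = \binom{n}{k}\,\tfrac{n-k}{n}$ rewrites this as $k\binom{n}{k}(k/n)^k(1-k/n)^{n+1-k} = k\bigl(1 - c(k,n)\bigr)$, matching the claimed value.

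The crux of the argument is a pairwise symmetrization (Schur-concavity) property. Fix distinct indices $i,j \in S$ and all other coordinates $x_\ell$ with $\ell \notin \{i,j\}$, let $\sigma := x_i + x_j$, and define $\phi(t) := h_S^k(\dots,x_i\!=\!t,\,x_j\!=\!\sigma-t,\dots)$. Since every variable appears at most linearly in $p_S(A)$ and at most linearly in $k - x(A)$, $\phi$ is a polynomial in $t$ of degree at most two. Moreover, the symmetry of $h_S^k$ in its arguments gives $\phi(\sigma - t) = \phi(t)$, so $\phi$ is a parabola symmetric about $\sigma/2$, of the form $\phi(t) = a(t - \sigma/2)^2 + b$. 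The key step is to show $a \le 0$; then $\phi$ is concave and maximized at $t = \sigma/2$, so replacing the pair $(x_i,x_j)$ by its average never decreases $h_S^k$. Iterating this averaging across all pairs, using $\sum_i (x_i - \bar x)^2$ as a strict Lyapunov function whenever $a < 0$ and appealing to compactness, shows that some maximum of $h_S^k$ is attained on the diagonal $x_1 = \cdots = x_{n-1} = \alpha$.

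On the diagonal, $h_S^k(\alpha,\dots,\alpha) = k\binom{n-1}{k}\alpha^k(1-\alpha)^{n-k}$, whose derivative $k\binom{n-1}{k}\alpha^{k-1}(1-\alpha)^{n-k-1}(k - n\alpha)$ vanishes only at $\alpha = k/n$ in $(0,1)$, is positive on $(0,k/n)$, and negative on $(k/n,1)$. Hence $\alpha = k/n$ is the unique interior maximizer on the diagonal, which equals $x^\star$ and produces the value computed above.

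The main obstacle is verifying $a \le 0$. The plan is to split the sum defining $h_S^k$ according to the intersection pattern $|A \cap \{i,j\}| \in \{0,1,2\}$ and collect the $t^2$-coefficient from each class of terms; after simplification one obtains an expression of the form
\[
a \;=\; 2\,h_{S_0}^{k-1}(x_{-ij}) \,-\, h_{S_0}^{k-2}(x_{-ij}) \,-\, h_{S_0}^{k}(x_{-ij}) \,-\, (2-\sigma)\,\mathbb{P}[T_0 = k-2] \,-\, \sigma\,\mathbb{P}[T_0 = k-1],
\]
where $S_0 := S \setminus \{i,j\}$ and $T_0 := \sum_{\ell \in S_0}X_\ell$ is the associated Poisson binomial sum. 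The inequality $a \le 0$ then reduces to a log-concavity-type statement for the sequence $\{h_{S_0}^j\}_j$ together with the log-concavity of the Poisson binomial pmf, which can be verified by a direct algebraic computation or a short induction on $|S_0|$. This inequality is the computational heart of the proof.
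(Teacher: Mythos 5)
Your route is genuinely different from the paper's (which shows via an explicit gradient computation that $(k/n,\dots,k/n)$ is the unique interior critical point and then handles the boundary of the cube by induction on $n$), but as written it has a real gap at exactly the place you call the computational heart: the inequality $a\le 0$ is asserted, not proved. Using the paper's own Lemma~\ref{lemma_recursive_h_S_k}, $h_{S_0}^{j+1}-h_{S_0}^{j}=Q_{S_0}^{j}(x)\bigl(x(S_0)-j\bigr)$, your expression for $a$ simplifies (with $x(S_0)+\sigma=x(S)$) to
\begin{equation*}
a \;=\; Q_{S_0}^{k-2}(x)\,\bigl(x(S)-k\bigr)\;-\;Q_{S_0}^{k-1}(x)\,\bigl(x(S)-k+1\bigr),
\end{equation*}
so the claim you need is $Q_{S_0}^{k-2}(x)\,\bigl(x(S)-k\bigr)\le Q_{S_0}^{k-1}(x)\,\bigl(x(S)-k+1\bigr)$ for \emph{all} $x\in[0,1]^S$ (note that over the cube $x(S)$ may well exceed $k$). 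This is a quantitative comparison of two consecutive Poisson-binomial probabilities weighted by affine factors, and it does not follow from unimodality or log-concavity of the pmf alone: for instance, when $x(S)>k$ and the pmf is decreasing from $k-2$ to $k-1$ (mean of $T_0$ just below $k-1$), you need the ratio bound $Q_{S_0}^{k-1}/Q_{S_0}^{k-2}\ge (x(S)-k)/(x(S)-k+1)$, which requires a genuine argument relating the ratio to the mean (also note the left-hand side is not multilinear in the remaining coordinates, so you cannot reduce to vertices of the cube). I checked your formula for $a$ in small cases and it is consistent, and the inequality appears to be true, but until it is proved your argument does not establish the theorem; this missing lemma is comparable in difficulty to the paper's Proposition~\ref{prop_interior} plus Lemma~\ref{lemma_inductive_step}, which is what the paper uses instead to avoid any such pointwise concavity claim.

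Two smaller points. First, your justification that $\phi$ has degree at most two is not correct as stated: a single term $p_S(A)\bigl(k-x(A)\bigr)$ with $|A\cap\{i,j\}|=1$ is cubic in $t$ after substituting $x_j=\sigma-t$; the cubic parts only cancel after pairing $A$ with $(A\setminus i)\cup j$, which shows $\phi$ is an affine function of $x_ix_j=t(\sigma-t)$ — this is easily repaired but should be said. Second, the Lyapunov phrasing is slightly off: you do not need $a<0$; averaging strictly decreases $\sum_i(x_i-\bar x)^2$ regardless, and $a\le 0$ guarantees $h_S^k$ does not decrease, so choosing a maximizer of minimal variance (the set of maximizers is compact) already forces it onto the diagonal. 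The diagonal analysis and the evaluation at $(k/n,\dots,k/n)$ are correct and match the paper.
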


\noindent For simplicity, we denote this maximum value by:
\begin{equation*}
	\alpha(k,n) := k \; \binom{n}{k}\:\left(1-\frac{k}{n}\right)^{n+1-k}\:\left(\frac{k}{n}\right)^k.
\end{equation*}

Notice that $h_S^0(x) = h_S^n(x) = 0$ for any $x \in [0,1]^S$. Hence Theorem~\ref{thm_max_hSk} holds for $k=0$ and $k=n$ as well. Moreover, the function $h_S^k(x)$ also satisfies an interesting duality property: $h_S^k(x) = h_S^{n-k}(1-x)$.

In order to prove Theorem \ref{thm_max_hSk}, we first show that $h$ has a unique extremum (in particular a local maximum) in the interior of $[0,1]^S$ at the point $(k/n,\dots,k/n)$ --- see Proposition \ref{prop_interior}. We then use induction on $n$ to show that any point in the boundary of $[0,1]^S$ has a lower function value than $h_S^k(k/n,\dots,k/n)$.  Since our function is continuous over a compact domain, it attains a maximum. That maximum then has to be attained at $(k/n,\dots,k/n)$ by the two arguments above. That is, the unique extremum cannot be a local minimum or a saddle point. Otherwise, since there are no more extrema in the interior and the function is continuous, the function would increase in some direction leading to a point in the boundary having higher value.
For completion, in the appendix, we present another proof showing local maximality that relies on the Hessian matrix.

\begin{proposition}
	\label{prop_interior}
	For any $k \in \{1,\dots,n-1\}, h_S^k(x)$ has a unique extremum in the interior of the unit hypercube $[0,1]^S$ at the point $(k/n,\dots,k/n)$.
\end{proposition}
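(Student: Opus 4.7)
I would show that $h_S^k$ has a unique critical point on the open cube $(0,1)^S$, namely $(k/n,\dots,k/n)$. Fix $j\in S$, set $S'=S\setminus\{j\}$, and introduce
\[
H_{S'}^t(y)\,:=\,\sum_{A\subseteq S',\,|A|=t}p_{S'}(A)(k-y(A)),\qquad P_{S'}^t(y)\,:=\,\sum_{A\subseteq S',\,|A|=t}p_{S'}(A),
\]
which depend only on $y\in[0,1]^{S'}$. Splitting the sum that defines $h_S^k$ according to whether $j$ belongs to $A$ yields
\[
h_S^k(x)\,=\,-x_j^2\,P_{S'}^{k-1}(x|_{S'})\,+\,x_j\bigl(H_{S'}^{k-1}(x|_{S'})-H_{S'}^k(x|_{S'})\bigr)\,+\,H_{S'}^k(x|_{S'}),
\]
a strictly concave quadratic in $x_j$ on the interior (the leading coefficient $-P_{S'}^{k-1}$ is negative there). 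The equation $\partial h_S^k/\partial x_j=0$ therefore uniquely determines $x_j$ in terms of $x|_{S'}$. Substituting $x_i=k/n$ for all $i$ into this equation and clearing the common factor $(k/n)^{k-1}(1-k/n)^{n-1-k}$ reduces it to the binomial identity $(n-k-1)\binom{n-2}{k-1}=k\binom{n-2}{k}$, which is a direct factorial manipulation. Hence $(k/n,\dots,k/n)$ is a critical point.

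\noindent\textbf{Key symmetrisation step.} Fix distinct $j_1,j_2\in S$, let $T=S\setminus\{j_1,j_2\}$, and write $s=x_{j_1}+x_{j_2}$. Decomposing each of $H_{S\setminus\{j_1\}}^{k-1},\,H_{S\setminus\{j_1\}}^{k},\,P_{S\setminus\{j_1\}}^{k-1}$ by splitting around $j_2$ (and similarly for $j_2$) and taking the difference of the two partial derivatives, after cancellations one obtains the factorisation
\[
\frac{\partial h_S^k}{\partial x_{j_1}}(x)\,-\,\frac{\partial h_S^k}{\partial x_{j_2}}(x)\,=\,(x_{j_2}-x_{j_1})\,M(s,x|_T),
\]
where
\[
M(s,x|_T)\,=\,H_T^{k-2}\,-\,2H_T^{k-1}\,+\,H_T^k\,-\,s\,P_T^{k-2}\,+\,(s+2)\,P_T^{k-1}.
\]
If $M>0$ on the open cube, then requiring both partial derivatives to vanish forces $x_{j_1}=x_{j_2}$; applied to all pairs, this shows every interior critical point has all coordinates equal to a common value $t\in(0,1)$. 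Restricting $h_S^k$ to that diagonal gives $k\binom{n-1}{k}t^k(1-t)^{n-k}$, whose unique critical point in $(0,1)$ is $t=k/n$.

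\noindent\textbf{Main obstacle.} The crux of the argument is the positivity of $M$ on $(0,1)^S$. The summand $-2H_T^{k-1}$ is of indefinite sign, so a careful rearrangement is needed. A promising start is
\[
H_T^{k-2}\,-\,s\,P_T^{k-2}\,=\,\sum_{|A|=k-2}p_T(A)\bigl(k-x(A)-s\bigr)\,\ge\,0
\]
(since $x(A)+s\le(k-2)+2=k$), together with $H_T^k\ge 0$; the remaining task reduces to showing $(s+2)P_T^{k-1}-2H_T^{k-1}=\sum_{|A|=k-1}p_T(A)\bigl(s+2-2(k-x(A))\bigr)$ is suitably dominated, which I would attempt by pairing each $A$ with $|A|=k-1$ to related sets of sizes $k-2$ and $k$ via the shift $x_i\,p_T(A)=(1-x_i)\,p_T(A\cup\{i\})$ used in \eqref{vondrak_trick}. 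As a geometric cross-check, the identity $h_S^k(x_{j_1},x_{j_2},x|_T)=h_S^k(s/2,s/2,x|_T)-\tfrac14(x_{j_2}-x_{j_1})^2\,M(s,x|_T)$ holds (by symmetry in $(x_{j_1},x_{j_2})$ and the absence of an $x_{j_1}^2 x_{j_2}^2$ monomial in the expansion), recognising $M$ as $-2$ times the directional second derivative of $h_S^k$ across the antidiagonal; positivity of $M$ is then equivalent to strict concavity of $h_S^k$ along that direction.
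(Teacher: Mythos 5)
There is a genuine gap: your entire uniqueness argument rests on the positivity of $M(s,x|_T)$ on the open cube, and you never prove it --- you explicitly flag it as the ``main obstacle'' and only sketch an attack. The sketch itself is not convincing as it stands: splitting off $H_T^{k-2}-sP_T^{k-2}\ge 0$ and $H_T^k\ge 0$ leaves you needing control of $(s+2)P_T^{k-1}-2H_T^{k-1}=\sum_{|A|=k-1}p_T(A)\bigl(s+2-2(k-x(A))\bigr)$, which is genuinely negative in large parts of the cube (take $k$ large, $x(A)$ and $s$ small), so no term-by-term domination can work. Worse, the inequality $M>0$ is tight: already for $n=4$, $k=2$ one computes $M=2(1-x_3)^2-s(1-2x_3)$, which tends to $0$ as $x_3\to 0$, $s\to 2$, so any correct proof must exploit cancellations between all five terms. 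Your preliminary steps are fine (I checked that the quadratic-in-$x_j$ decomposition, the factorization $\partial_{j_1}h-\partial_{j_2}h=(x_{j_2}-x_{j_1})M$, and the verification that $(k/n,\dots,k/n)$ is critical via $(n-k-1)\binom{n-2}{k-1}=k\binom{n-2}{k}$ are all correct), but without $M>0$ the conclusion ``every interior critical point lies on the diagonal'' is unproven, and the proposition does not follow.

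It is worth noting how close you were to a complete and much shorter argument, which is essentially the paper's. From your own decomposition, $\frac{\partial h_S^k}{\partial x_j}=-2x_jP_{S'}^{k-1}+H_{S'}^{k-1}-H_{S'}^{k}$, and since $H_{S'}^{k-1}=h_{S'}^{k-1}+P_{S'}^{k-1}$ and $H_{S'}^{k}=h_{S'}^{k}$, the recursion of Lemma~\ref{lemma_recursive_h_S_k} gives $H_{S'}^{k-1}-H_{S'}^{k}=P_{S'}^{k-1}\bigl(k-x(S')\bigr)$, hence
\begin{equation*}
\frac{\partial h_S^k}{\partial x_j}(x)=P_{S'}^{k-1}(x|_{S'})\,\bigl(k-x(S)-x_j\bigr),
\end{equation*}
which is exactly the factorization the paper derives. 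In the interior $P_{S'}^{k-1}>0$, so $\nabla h_S^k(x)=0$ forces $x_j=k-x(S)$ for all $j$, i.e.\ all coordinates equal to $k/n$; no pairwise symmetrization, no concavity along antidiagonal directions, and no positivity of $M$ are needed. If you prefer to keep your route, the statement $M>0$ (equivalently, strict concavity of $h_S^k$ along every direction $e_{j_1}-e_{j_2}$ in the open cube) must be proved; as shown above it is delicate, and closing it would likely take more work than the direct gradient factorization.
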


\begin{figure}[t]
	\center
	\begin{subfigure}{0.4 \textwidth}
		\center
		\includegraphics[width = \textwidth]{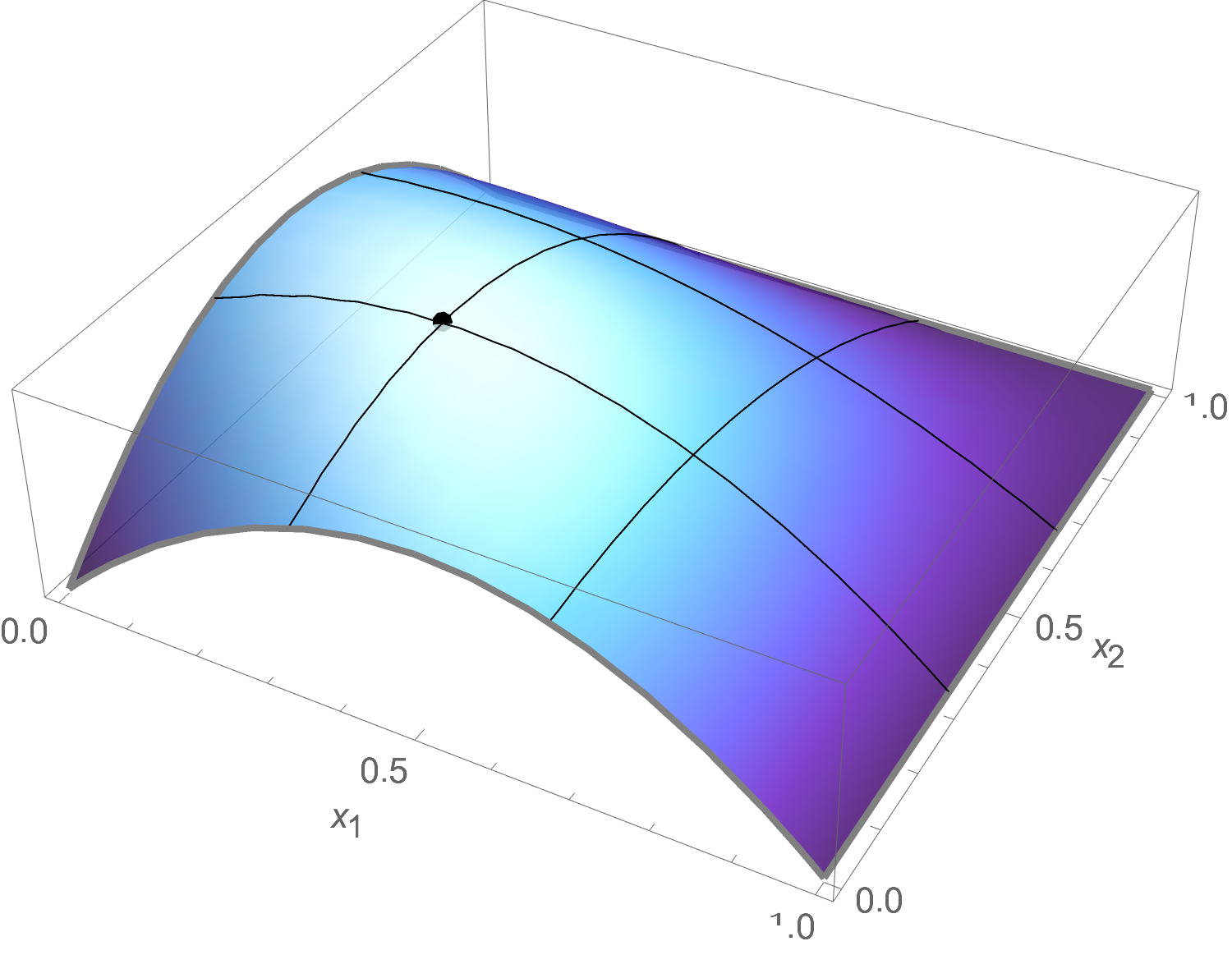}
		\caption{$S=\{1,2\}, k=1$}
		\label{figure_hsk_a}
	\end{subfigure}
	\hspace*{2cm}
	\begin{subfigure}{0.4\textwidth}
		\center
		\includegraphics[width = \textwidth]{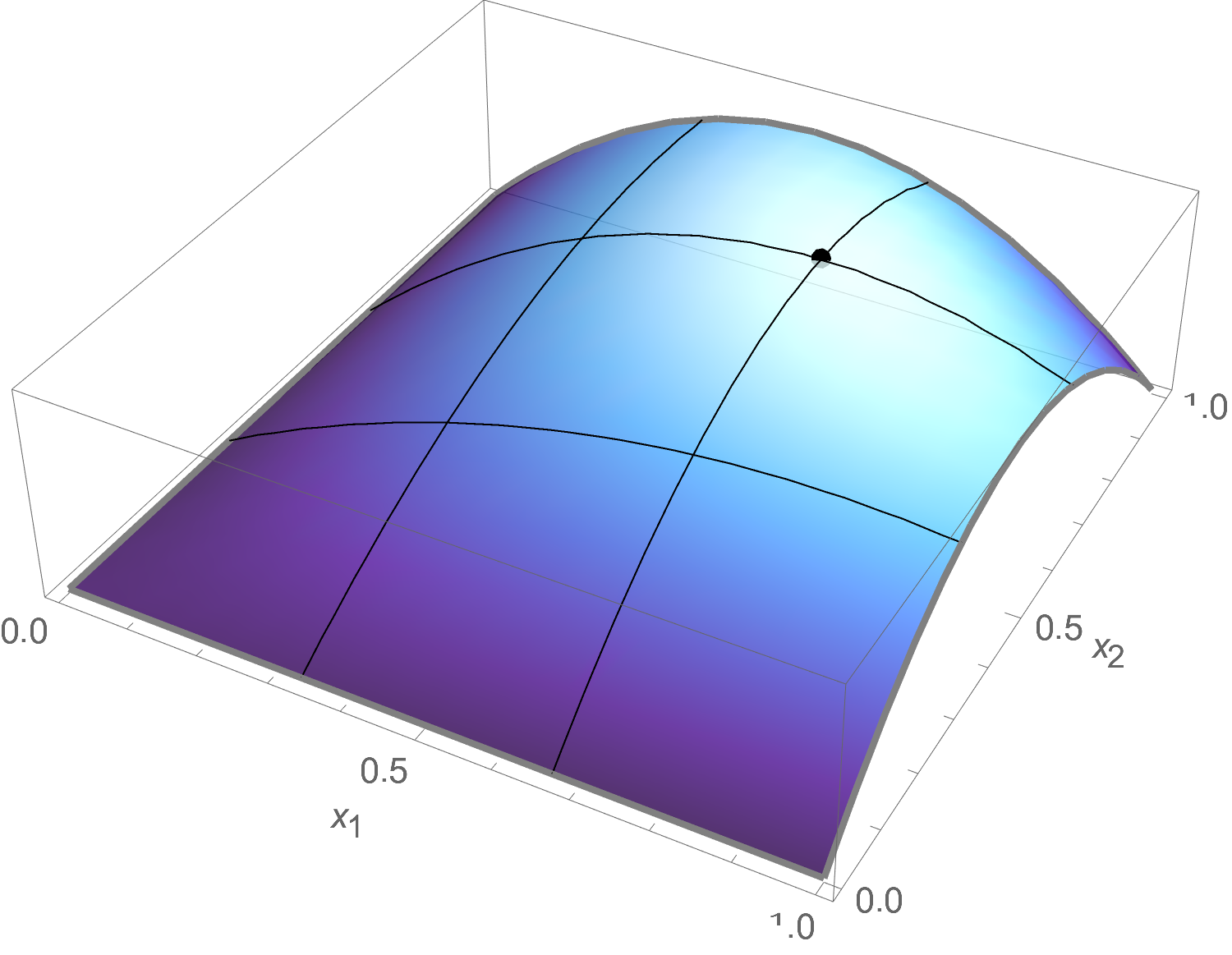}
		\caption{$S=\{1,2\}, k=2$}
	\end{subfigure}
	\caption{Plot of $h_S^k(x)$ for $S=\{1,2\}$. The maximum is attained at $x_1 = x_2 = 1/3$ in (a) and at $x_1 = x_2 = 2/3$ in (b).}
	\label{figure_hsk}
\end{figure} 

For proving Proposition \ref{prop_interior} we need the following lemma. We leave its proof to the appendix.
\begin{lemma}
	\label{lemma_recursive_h_S_k}
	The following holds for any $x \in [0,1]^S$:
	\begin{equation*}
		h_S^k(x) = \sum_{i = 0}^{k-1} Q_S^i(x) \: \big(x(S) - i\big)
	\end{equation*}
	where
	\begin{equation*}
		Q_S^k(x) := \sum_{A \subseteq S, |A| = k} p_S(A).
	\end{equation*}
\end{lemma}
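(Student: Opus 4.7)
The plan is to prove the identity by rearranging the right-hand side directly until it collapses to $h_S^k(x)$ via a telescoping argument. The key algebraic observation is that whenever $|A|=i$, we can split
\[
x(S) - i \;=\; x(S\setminus A) \;-\; \bigl(i - x(A)\bigr) \;=\; x(S\setminus A) \;-\; \sum_{j \in A}(1 - x_j).
\]
Substituting this into $\sum_{i=0}^{k-1} Q_S^i(x)(x(S)-i)$ naturally breaks the sum into two pieces.

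The first piece, $\sum_{i=0}^{k-1}\sum_{A:|A|=i} p_S(A) \sum_{j \in A}(1-x_j)$, is immediate: the inner sum equals $\sum_{A:|A|=i} p_S(A)(i-x(A)) = h_S^i(x)$ by definition, so this piece equals $\sum_{i=0}^{k-1} h_S^i(x)$.

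The second piece is $\sum_{i=0}^{k-1}\sum_{A:|A|=i} p_S(A)\,x(S\setminus A)$. Here I would expand $x(S\setminus A) = \sum_{j \in S\setminus A} x_j$ and apply the identity $x_j\, p_S(A) = (1-x_j)\, p_S(A\cup\{j\})$ for $j \notin A$ — the same Vondr\'ak-style trick already exploited in \eqref{vondrak_trick}. Reindexing by $B := A \cup \{j\}$, which ranges over subsets of $S$ of size $i+1$ with $j \in B$, converts the inner sum into $\sum_{B:|B|=i+1} p_S(B)\sum_{j \in B}(1-x_j) = h_S^{i+1}(x)$, so this piece equals $\sum_{i=0}^{k-1} h_S^{i+1}(x) = \sum_{i=1}^{k} h_S^i(x)$.

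Subtracting the first piece from the second then produces a telescoping cancellation:
\[
\sum_{i=0}^{k-1} Q_S^i(x)\bigl(x(S)-i\bigr) \;=\; \sum_{i=1}^{k} h_S^i(x) \;-\; \sum_{i=0}^{k-1} h_S^i(x) \;=\; h_S^k(x) - h_S^0(x) \;=\; h_S^k(x),
\]
since $h_S^0(x)=0$ (the only set of size $0$ is $\emptyset$, and $0 - x(\emptyset) = 0$). The only step that really requires care is the reindexing after the Vondr\'ak trick, namely that each $B$ of size $i+1$ is counted once for each $j \in B$, so that the outer $j$-sum naturally assembles into $\sum_{j \in B}(1-x_j)$; the rest is bookkeeping.
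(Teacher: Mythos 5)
Your proof is correct and is essentially the paper's own argument: both reduce to the one-step identity $Q_S^i(x)\,\big(x(S)-i\big) = h_S^{i+1}(x) - h_S^i(x)$, established via the flip $x_j\,p_S(A) = (1-x_j)\,p_S(A\cup\{j\})$ for $j \notin A$, and then telescope using $h_S^0(x)=0$. The paper simply runs the same computation in the opposite direction (rewriting $h_S^k$ with the flip and subtracting the definitional expansion of $h_S^{k-1}$), so the difference is purely presentational.
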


The above formula actually holds for $h_A^k$ with any $A \subseteq N$. We use this in Section \ref{subsection_optimality_uniform} with $A = N$.
We are now able to prove Proposition \ref{prop_interior}.

\begin{proof}[Proof of Proposition \ref{prop_interior}]
	Let $k \in \{1,\dots,n-1\}$. To find the extrema of $h_S^k:[0,1]^S \mapsto \mathbb{R}$, we want to solve $\nabla h_S^k(x)=0$. We thus first need to compute the partial derivatives $\frac{\partial h_S^k(x)}{\partial x_i}$ for every $i \in S$.
	Note that for a set $A \subseteq S$ such that $i \in A$, we have
	\begin{align*}
		\begin{split}
			\frac{\partial}{\partial x_i}p_S(A)(k-x(A)) &= (k - x(A)) \prod_{j \in A \setminus i}x_j \prod_{j \in S \setminus A}(1-x_j) - p_S(A) \\
			&= (k - x(A)) \prod_{j \in A \setminus i}x_j \prod_{j \in S \setminus A}(1-x_j) - x_i \prod_{j \in A \setminus i} x_j \prod_{j \in S \setminus A} (1-x_j) \\
			&= (k-x(A)) \; p_{S \setminus i}(A \setminus i) - x_i\;  p_{S \setminus i}(A \setminus i) 
			= p_{S \setminus i}(A \setminus i) \Big(k - x(A \setminus i) - 2x_i\Big).
		\end{split}
	\end{align*}
	
	\noindent For a set $A \subseteq S$ such that $i \notin A$, we get
	\begin{align*}
		\begin{split}
			\frac{\partial}{\partial x_i}p_S(A)(k-x(A)) &= - (k - x(A)) \prod_{j \in A}x_j \prod_{j \in (S \setminus A)\setminus i} (1-x_j) 
			= - p_{S \setminus i}(A)\Big(k-x(A)\Big).
		\end{split}
	\end{align*}

	\noindent We then have
	
	\begin{align*}
		\frac{\partial h_S^k(x)}{\partial x_i} &= \frac{\partial}{\partial x_i}\sum_{\substack{A \subseteq S \\ |A|=k}} p_S(A) (k-x(A)) 
		= \sum_{\substack{A \subseteq S \\ |A|=k \\ i \in A}}\frac{\partial}{\partial x_i}p_S(A) (k-x(A)) + \sum_{\substack{A \subseteq S \\ |A|=k \\ i \notin A}}\frac{\partial}{\partial x_i}p_S(A) (k-x(A)) \nonumber\\
		&= \sum_{\substack{A \subseteq S \\ |A|=k \\ i \in A}}p_{S \setminus i}(A \setminus i) \Big(k - x(A \setminus i) - 2x_i\Big) - \sum_{\substack{A \subseteq S \\ |A|=k \\ i \notin A}}p_{S \setminus i}(A)\Big(k-x(A)\Big) \nonumber\\
		&= \sum_{\substack{B \subseteq S \setminus i\\|B|=k-1}}p_{S \setminus i}(B)\Big(k - x(B) - 2x_i\Big) - \sum_{\substack{A \subseteq S \setminus i\\|A|=k}}p_{S \setminus i}(A)\Big(k-x(A)\Big) \nonumber\\
		&= \sum_{\substack{A \subseteq S \setminus i\\|A|=k-1}}p_{S \setminus i}(A)\Big(k -1 - x(A)+ 1 - 2x_i\Big) - h_{S \setminus i}^k(x) \nonumber\\
		&= \sum_{\substack{A \subseteq S \setminus i\\|A|=k-1}}p_{S \setminus i}(A)\Big(k -1 - x(A)\Big) + (1-2x_i)\sum_{\substack{A \subseteq S \setminus i\\|A|=k-1}}p_{S \setminus i}(A) - h_{S \setminus i}^k(x) \nonumber\\
		&= (1-2x_i)Q_{S \setminus i}^{k-1}(x) - \Big(h_{S \setminus i}^k(x) - h_{S \setminus i}^{k-1}(x)\Big) \nonumber \\
		&= (1-2x_i)Q_{S \setminus i}^{k-1}(x) - Q_{S \setminus i}^{k-1}(x) \Big(x(S \setminus i) - (k-1)\Big) 
		= Q_{S \setminus i}^{k-1}(x) \Big(k - x(S) - x_i\Big) 
	\end{align*}
	where we use Lemma \ref{lemma_recursive_h_S_k} in the last line. From the above it follows that $\nabla h_S^k(x) = 0$ if and only if
	\begin{equation}
		\label{eq_gradient_zero}
		Q_{S \setminus i}^{k-1}(x) \Big(k - x(S) - x_i\Big) = 0 \quad \forall i \in S.
	\end{equation}
	Notice that
	\begin{align*}
		Q_{S \setminus i}^{k-1}(x) = 0 &\iff \sum_{\substack{A \subseteq S \setminus i \\ |A|=k-1}}p_{S \setminus i}(A) = 0 \iff p_{S \setminus i}(A) = 0 \quad \forall A\subseteq S \setminus i, |A|=k-1 \\
		& \iff \prod_{j \in A}x_j \prod_{j \in (S \setminus i)\setminus A}(1-x_j) = 0 \quad \forall A \subseteq S, |A|=k-1.
	\end{align*}
	We can see this implies that such a solution lies on the boundary of $[0,1]^S$, since there exists an index $j \in S$ such that $x_j=0$ or $x_j=1$. Since we are focusing on extrema in the interior, we may disregard that solution. Hence, by \eqref{eq_gradient_zero}, we have $x_i = k - x(S)$ for all $i \in S$. All the $x_i$ are thus equal and by setting $x_i = t$ for every $i \in S$, we get
	\begin{align*}
		t = k - (n-1)t \iff t = k/n \iff x_i = k/n \quad \forall i \in S.
	\end{align*}
	Therefore, $h_S^k(x)$ has a unique extremum in the interior of $[0,1]^S$ at the point $(k/n,\dots,k/n)$.
\end{proof}

In order to prove Theorem~\ref{thm_max_hSk}, we need one additional lemma; we leave its proof to the appendix.

\begin{lemma}
	\label{lemma_inductive_step}
	The following holds for any $n \geq 2$ and $k \in \{1,\dots,n-1\}$:
	\begin{align}
		\alpha(k,n) &> \alpha(k-1,n-1), \label{lemma_inductive_first_statement}
		\\
		\alpha(k,n) &> \alpha(k,n-1). \label{lemma_inductive_second_statement}
	\end{align}
\end{lemma}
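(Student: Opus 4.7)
The plan is to reduce both inequalities to the classical fact that $h(m) := \bigl(\tfrac{m}{m-1}\bigr)^m$ is strictly decreasing in $m \geq 2$. First I would rewrite
\[
\alpha(k,n) = \frac{k^{k+1}}{k!} \cdot \frac{n!}{(n-k)!} \cdot \frac{(n-k)^{n+1-k}}{n^{n+1}},
\]
obtained by expanding the binomial coefficient in the definition and regrouping the powers of $n$. This form is convenient because most factors cancel cleanly in the ratios of interest.

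Next, assuming $k \geq 2$ and $k \leq n-2$ so that the denominators below are nonzero, I would compute the two ratios directly. After cancellation of the falling factorials and collecting powers, the algebra should simplify to
\[
\frac{\alpha(k,n)}{\alpha(k-1,n-1)} = \Bigl(\frac{k}{k-1}\Bigr)^{\!k}\! \Bigl(\frac{n-1}{n}\Bigr)^{\!n} = \frac{h(k)}{h(n)}, \qquad \frac{\alpha(k,n)}{\alpha(k,n-1)} = \Bigl(\frac{n-k}{n-k-1}\Bigr)^{\!n-k}\! \Bigl(\frac{n-1}{n}\Bigr)^{\!n} = \frac{h(n-k)}{h(n)}.
\]
Since $k < n$ and $n-k < n$ in these regimes, both inequalities reduce to the monotonicity of $h$ on integers $\geq 2$.

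To establish that $h$ is strictly decreasing, I would take logarithms, $\log h(m) = m \log\!\bigl(1 + \tfrac{1}{m-1}\bigr)$, and differentiate with respect to a real variable $m \geq 2$. The derivative simplifies to $\log\bigl(\tfrac{m}{m-1}\bigr) - \tfrac{1}{m-1}$, which is negative by the elementary inequality $\log(1+t) < t$ for $t > 0$. Equivalently, this is the classical fact that $(1 + 1/m)^{m+1}$ is strictly decreasing in $m$.

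Finally, the edge cases $k = 1$ in \eqref{lemma_inductive_first_statement} and $k = n-1$ in \eqref{lemma_inductive_second_statement} must be treated separately, but both are trivial: $\alpha(0, n-1) = 0$ and $\alpha(n-1, n-1) = 0$ (because of the vanishing factors $k = 0$ and $(1 - k/m)^{m+1-k} = 0^{1}$, respectively), whereas $\alpha(k,n) > 0$ for every $1 \le k \le n-1$. The only real technical effort lies in the algebraic bookkeeping that brings the two ratios into the clean $h(\cdot)/h(n)$ form; once that is done, both inequalities follow immediately from a well-known monotonicity, so no conceptual difficulty arises.
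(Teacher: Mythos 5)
Your proposal is correct and follows essentially the same route as the paper: both compute the two ratios $\alpha(k,n)/\alpha(k-1,n-1)$ and $\alpha(k,n)/\alpha(k,n-1)$, reduce them to the monotonicity of $m \mapsto \bigl(\tfrac{m}{m-1}\bigr)^m$ (the paper works with its reciprocal $g(x)=\bigl(\tfrac{x-1}{x}\bigr)^x$, shown increasing via $\log(1+t)<t$), and dispose of the degenerate cases $k=1$ and $k=n-1$ by noting the corresponding $\alpha$ vanishes. One minor bookkeeping remark: the restriction $k\le n-2$ is only needed for the second ratio, since $\alpha(k-1,n-1)>0$ for all $2\le k\le n-1$, so the first inequality at $k=n-1$ is already covered by your ratio formula $h(k)/h(n)$ and needs no separate treatment.
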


%\noindent We are now ready to prove Theorem \ref{thm_max_hSk}.

\begin{proof}[Proof of Theorem \ref{thm_max_hSk}]
	We prove the statement by induction on $n\geq 2$. 
	The base case corresponds to $n=2$ and $k=1$. In this case, we get $S = \{1\}$ and $h_S^k(x) = x_1 (1-x_1)$.
	It is easy to see that this is a parabola which attains its maximum at the point $x_1 = 1/2$ over the unit interval $[0,1]$. Moreover the function value at that point is $1/4 = \alpha(1,2)$.
	
	We now prove the induction step. Let $n \geq 3$ and $k\in \{1,\dots,n-1\}$, and assume by induction hypothesis that the statement holds for any $2 \leq n' < n$ and $k \in \{1,2,\ldots,n'-1\}$.
	
	By Proposition \ref{prop_interior}, $h_S^k(x)$ has a unique extremum 
	%(in particular a local maximum) 
	in the interior of $[0,1]^S$ at the point $(k/n,\dots,k/n)$. We first show that the function $h_S^k(x)$ evaluated at that point is indeed equal to $\alpha(k,n)$.
	
	\begin{align*}
		%\label{eq_objective_value_kn}
		\begin{split}
			h_S^k(k/n,\dots,k/n) &= \binom{n-1}{k}\left(\frac{k}{n}\right)^k \left(1 -\frac{k}{n}\right)^{n-1-k} \left(k - k \frac{k}{n}\right) 
			= k \binom{n-1}{k}\left(\frac{k}{n}\right)^k \left(1 -\frac{k}{n}\right)^{n-k} \\
			&= k \frac{n-k}{n} \binom{n}{k}\left(\frac{k}{n}\right)^k \left(1 -\frac{k}{n}\right)^{n-k} 
			= \alpha(k,n).
		\end{split}
	\end{align*}
	
	We next show that any point on the boundary of $[0,1]^S$ has a lower function value than $\alpha (k,n)$. A point $x \in [0,1]^S$ lies on the boundary if there exists $i \in S$ such that $x_i = 0$ or $x_i=1$.
	
	\begin{itemize}
		\item Suppose there exists $i \in S$ such that $x_i = 0$. For any set $A \subseteq S$ containing $i$, we get $p_S(A) = 0$. Hence:
		\begin{align*}
			h_S^k(x) = \sum_{A \subseteq S, |A| = k}p_S(A)(k-x(A)) = \sum_{A \subseteq S\setminus i, |A| = k}p_{S \setminus i}(A)(k-x(A)) = h_{S\setminus i}^k(x).
		\end{align*}
		If $k = n-1$, then $h_{S\setminus i}^k(x) = 0$. We then clearly get $h_S^k(x) = h_{S \setminus i}^k(x) = 0 < \alpha(k,n)$.
		If $k < n-1$, then by induction hypothesis and Lemma \ref{lemma_inductive_step},
		\[h_S^k(x) = h_{S \setminus i}^k(x) \leq \alpha(k,n-1) < \alpha(k,n).\]
		
		\item Suppose there exists $i \in S$ such that $x_i = 1$. For any set $A \subseteq S$ not containing $i$, we get $p_S(A)=0$. Hence:
		\begin{align*}
			h_S^k(x) &= \sum_{A \subseteq S, |A| = k}p_S(A)(k-x(A)) = \sum_{\substack{A \subseteq S,|A|=k\\i\in A}}p_S(A)(k-x(A)) \\
			&= \sum_{\substack{A \subseteq S, |A|=k\\i\in A}}p_{S\setminus i}(A\setminus i)(k-1-x(A\setminus i)) = \sum_{\substack{A \subseteq S\setminus i\\|A|=k-1}}p_{S\setminus i}(A)(k-1-x(A)) 
			= h_{S \setminus i}^{k-1}(x).
		\end{align*}
		If $k=1$, then $h_{S \setminus i}^{k-1}(x) = 0$. We then clearly get $h_S^k(x) = h_{S \setminus i}^{k-1}(x) = 0<\alpha(k,n).$
		If $k > 1$, then by induction hypothesis and Lemma \ref{lemma_inductive_step},
		\[h_S^k(x) = h_{S \setminus i}^{k-1}(x) \leq \alpha(k-1,n-1) < \alpha(k,n).\]  
	\end{itemize}
	Since our function is continuous over a compact domain, it attains a maximum. By using continuity, together with the facts that $(k/n,\dots,k/n)$ is the unique extremum in the interior, and that it has higher function value than any point at the boundary, it follows that $(k/n,\dots,k/n)$ must be a global maximum. This completes the proof.
\end{proof}

\subsection{Proof of Theorem \ref{thm_optimal_crs_ukn}}
\label{subsection_proof_of_main_theorem_uniform}
We now have all the ingredients to prove Theorem \ref{thm_max_crs_ukn} and, therefore, Theorem \ref{thm_optimal_crs_ukn}. The two main building blocks for the proof are Lemma \ref{technical_lemma} and Theorem \ref{thm_max_hSk}. 

\iffalse
Let us restate the main theorems for convenience.
\begin{theorem_no_nb}[Theorem \ref{thm_optimal_crs_ukn}]
	Algorithm \ref{algo_optimal_crs_ukn} is a $c$-balanced CR scheme for the uniform matroid of rank $k$ on $n$ elements (denoted by $U^k_n$), where:
	\begin{equation}
		c = 1 - \binom{n}{k}\:\left(1-\frac{k}{n}\right)^{n+1-k}\:\left(\frac{k}{n}\right)^k=:c(k,n).
	\end{equation}
\end{theorem_no_nb}
\noindent We have already argued in Section \ref{subsection_outline_proof} that proving Theorem \ref{thm_max_crs_ukn} would imply Theorem \ref{thm_optimal_crs_ukn}.

\begin{theorem_no_nb}[Theorem \ref{thm_max_crs_ukn}]
	The following maximization problem satisfies
	\begin{equation}
		\max_{x \in P_{\I}} G(x) = \binom{n}{k}\:\left(1-\frac{k}{n}\right)^{n+1-k}\:\left(\frac{k}{n}\right)^k = 1-c(k,n)
	\end{equation}
	and the maximum is attained at the point \[(x_1,\dots,x_n) = (k/n,\dots,k/n) \in P_{\I}.\]
\end{theorem_no_nb}
\fi

\begin{proof}[Proof of Theorem \ref{thm_max_crs_ukn}]
	
	By Lemma \ref{technical_lemma}, we get that for any $x \in P_{\I}$ (since $P_{\I} \subseteq \widetilde{P}_{\I}$),
	\begin{equation}
		\label{eq_first_part_of_theorem}
		G(x) \leq \sum_{A \subseteq S, \, |A| = k}p_S(A)(1-\bar{x}(A)).
	\end{equation}
	Moreover, for every $x \in P_{\I}$ satisfying $x_e = k - x(S)$, equality holds in \eqref{eq_first_part_of_theorem}.
	
	\noindent By Theorem \ref{thm_max_hSk}, we get that for any $x \in P_{\I}$,
	\begin{equation}
		\label{eq_second_part_of_theorem}
		\sum_{A \subseteq S, \, |A| = k}p_S(A)(1-\bar{x}(A)) \leq 1-c(k,n).
	\end{equation}
	Equality holds in \eqref{eq_second_part_of_theorem} if $x_i = k/n$ for every $i \in S$. This holds because the above expression does not depend on $x_e$, and the projection of the polytope $P_{\I}$ to the $S$ coordinates is included in the unit hypercube $[0,1]^S$.
	
	Therefore, by combining \eqref{eq_first_part_of_theorem} and \eqref{eq_second_part_of_theorem}, we get that $G(x) \leq 1 - c(k,n)$ for every $x \in P_{\I}$.
	Moreover, for the point $x_i = k/n$ for every $i \in N$,
	equality holds: $G(k/n,\dots,k/n) = 1 - c(k,n).$
	Indeed, \eqref{eq_first_part_of_theorem} holds with equality because $x_e = k - x(S)$ is satisfied (since $k - x(S) = k - (n-1)k/n = k/n$), and \eqref{eq_second_part_of_theorem} also holds with equality because $x_i = k/n$ for every $i \in S$.
\end{proof}

\subsection{Optimality}
\label{subsection_optimality_uniform}
In this section, we argue that a balancedness of $c(k,n)$ is in fact optimal for $U^k_n$.
Our bound is more refined than the one given in \cite{yan2011mechanism}, in the sense that it depends on both $k$ and $n$.

\begin{theorem}
	\label{thm_optimality}
	There does not exist a $c$-balanced CR scheme for the uniform matroid of rank $k$ on $n$ elements satisfying
	$c > 1 - \binom{n}{k}\:\left(1-\frac{k}{n}\right)^{n+1-k}\:\left(\frac{k}{n}\right)^k$.
\end{theorem}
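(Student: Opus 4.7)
The plan is to use the symmetric point $x^\star := (k/n,\dots,k/n) \in P_{\I}$ as a universal obstruction; since Theorem~\ref{thm_max_crs_ukn} already identifies $x^\star$ as the worst input for our algorithm, it is natural to guess that \emph{every} CR scheme must fail there as well. Fix an arbitrary $c$-balanced CR scheme $\pi$ for $U^k_n$ and evaluate it at $x^\star$. By definition of balancedness, $\mathbb{P}[e \in \pi_{x^\star}(R(x^\star))] \geq c \cdot k/n$ for each $e \in N$, and summing over $e \in N$ gives the expected-size lower bound $\mathbb{E}[\,|\pi_{x^\star}(R(x^\star))|\,] \geq c\,k$.

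For the matching upper bound, I would use that $\pi_{x^\star}(R(x^\star))$ is an independent set of $U^k_n$ contained in $R(x^\star)$, whence $|\pi_{x^\star}(R(x^\star))| \leq \min(|R(x^\star)|,k) = k - (k-|R(x^\star)|)^+$. Taking expectations and combining with the lower bound reduces the claim $c \leq c(k,n)$ to showing
\[
\mathbb{E}\bigl[(k-|R(x^\star)|)^+\bigr] \;=\; k\bigl(1-c(k,n)\bigr) \;=\; \alpha(k,n).
\]

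To evaluate this expectation I would reuse Lemma~\ref{lemma_recursive_h_S_k}, noting (as the paper already points out) that the same identity holds with $S$ replaced by $N$, giving $h_N^k(x) = \sum_{i=0}^{k-1} Q_N^i(x)\,(x(N) - i)$. Evaluated at $x^\star$ the right-hand side is exactly $\sum_{i=0}^{k-1} \mathbb{P}[|R(x^\star)|=i]\,(k-i) = \mathbb{E}[(k-|R(x^\star)|)^+]$, while the left-hand side is immediate from the definition of $h_N^k$: at the symmetric point every $k$-subset $A \subseteq N$ contributes identically, yielding
\[
h_N^k(x^\star) = \binom{n}{k}\left(\frac{k}{n}\right)^k\left(1-\frac{k}{n}\right)^{n-k}\cdot k\left(1-\frac{k}{n}\right) = \alpha(k,n).
\]
Chaining the inequalities then gives $c\,k \leq k - \alpha(k,n) = k\,c(k,n)$, so $c \leq c(k,n)$, as required.

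The only genuinely substantive step is the identification of the binomial-tail quantity $\mathbb{E}[(k-|R(x^\star)|)^+]$ with $\alpha(k,n)$, and even that is not a real obstacle: Lemma~\ref{lemma_recursive_h_S_k} is exactly the bridge needed, and the closed-form evaluation of $h_N^k$ at the symmetric point is a one-line calculation. Thus the optimality theorem is essentially a short corollary of the machinery already developed for Theorem~\ref{thm_max_crs_ukn}, and I do not anticipate any serious difficulty.
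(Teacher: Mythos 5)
Your proposal is correct and follows essentially the same route as the paper: the paper also evaluates at the symmetric point $(k/n,\dots,k/n)$, bounds $c\,k \leq \mathbb{E}[r(R(x))] = \mathbb{E}[\min(|R(x)|,k)]$, and uses Lemma~\ref{lemma_recursive_h_S_k} (applied with $N$ in place of $S$) together with the closed-form evaluation $h_N^k(k/n,\dots,k/n)=\alpha(k,n)$ to conclude $c \leq c(k,n)$. Your rewriting of the expected rank as $k - \mathbb{E}[(k-|R(x)|)^+]$ is just a notational variant of the paper's computation.
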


\noindent The proof uses a similar argument to the one used for $U^1_n$ in \cite{crs}. It relies on computing the value $\mathbb{E}\big[r(R(x)\big]$, i.e., the expected rank of the random set $R(x)$. However, for values of $k>1$, the argument becomes more involved than the one presented in \cite{crs}. Our proof uses Lemma \ref{lemma_recursive_h_S_k}.

\begin{corollary}[of Lemma \ref{lemma_recursive_h_S_k}]
	\label{cor:eq_h_N_k_at_k_over_n}
	Let $x \in P_{\I}$ be the point $x_i = k/n$ for all $i \in N$. Then
	%\begin{equation}
	%\label{eq_h_N_k_at_k_over_n}
	$h_N^k(x) = \sum_{i =0}^{k-1} Q_N^i(x)(k-i)$.
	%\end{equation}
\end{corollary}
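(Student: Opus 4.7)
The plan is essentially a direct application of the general form of Lemma \ref{lemma_recursive_h_S_k}. As the author remarks just after the lemma, the identity
\[
h_A^k(x) \;=\; \sum_{i=0}^{k-1} Q_A^i(x)\,\bigl(x(A)-i\bigr)
\]
holds for any $A \subseteq N$, not only for $A = S$; indeed, nothing in the derivation of Lemma~\ref{lemma_recursive_h_S_k} used that $e \notin S$, so one may simply repeat the proof verbatim with $N$ in place of $S$. First I would invoke this generalized version with $A = N$, obtaining
\[
h_N^k(x) \;=\; \sum_{i=0}^{k-1} Q_N^i(x)\,\bigl(x(N)-i\bigr).
\]

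Next, I would plug in the specific point $x_i = k/n$ for all $i \in N$. This gives $x(N) = n \cdot (k/n) = k$, so $x(N) - i = k - i$. Substituting yields exactly
\[
h_N^k(x) \;=\; \sum_{i=0}^{k-1} Q_N^i(x)\,(k-i),
\]
which is the claimed identity.

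There is no real obstacle: the only content is to observe that Lemma~\ref{lemma_recursive_h_S_k} is stated for a generic ground set $S$ and therefore transfers to $N$, and that the uniform choice $x_i = k/n$ makes $x(N)$ equal to $k$. Thus the corollary reduces to a one-line specialization.
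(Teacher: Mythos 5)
Your argument is exactly the paper's: the remark after Lemma~\ref{lemma_recursive_h_S_k} states the identity holds for $h_A^k$ with any $A \subseteq N$, and the corollary follows by taking $A = N$ and noting that $x(N) = n\cdot(k/n) = k$ at the given point. This is correct and matches the intended proof.
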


\begin{proof}[Proof of Theorem \ref{thm_optimality}]
	Let $\pi$ be an arbitrary $c$-balanced CR scheme for $U^k_n$, and fix the point $x$ such that $x_i = \frac{k}{n}$ for every $i \in N.$
	Clearly, $x \in P_{\I} = \{x \in [0,1]^N: x_1 + \dots + x_n \leq k\}$.
	Let $R(x)$ be the random set satisfying $\mathbb{P}[i \in R(x)] = x_i$ for each $i$ independently, and denote by $I := \pi_x(R(x))$ the set returned by the CR scheme $\pi$.
	By definition of a CR scheme, we have $\mathbb{E}[|I|] \leq \mathbb{E}\big[r(R(x))\big]$ and 
	%\begin{align*}
	$\mathbb{E}[|I|] = \sum_{i \in N}\mathbb{P}[i \in I] \geq \sum_{i \in N} c \: x_i = \frac{n c k}{n} = c k$.
	%\end{align*}
	%We therefore get the following upper bound for $c$:
	It follows that
	%\begin{equation}
	%\label{upper_bound_c}
	$c \leq \mathbb{E}\big[r(R(x))\big] / k$.
	%\end{equation}
	Moreover, recall that
	\begin{equation}
		\label{eq_proba_Rx_equal_i}
		\mathbb{P}[|R(x)| = i] = \sum_{A \subseteq N, \, |A|=i}p_N(A) = Q_N^i(x).
	\end{equation}
	We then have
	\begin{align*}
		\mathbb{E}[r(R(x))] &= \sum_{i=0}^k i \; \mathbb{P}[r(R(x)) = i] = \sum_{i=0}^{k-1} i \; \mathbb{P}\big[|R(x)| = i\big] + k \; \mathbb{P}\big[|R(x)| \geq k\big]\\
		&= \sum_{i=0}^{k-1} i \; \mathbb{P}\big[|R(x)| = i\big] + k \; \Big(1 - \mathbb{P}\big[|R(x)| \leq k-1\big]\Big)\\
		&= k + \sum_{i=0}^{k-1} i \; \mathbb{P}\big[|R(x)| = i\big] - k \sum_{i=0}^{k-1} \; \mathbb{P}\big[|R(x)| = i\big] 
		= k - \sum_{i=0}^{k-1} (k-i) Q_N^i(x)
		= k - h_N^k(x) \\
		&= k - \sum_{A \subseteq N, |A| = k}p_N(A)(k - x(A)) 
		= k - \sum_{A \subseteq N, |A| = k} \left(\frac{k}{n}\right)^k \left(1-\frac{k}{n}\right)^{n-k} \left(k - k \frac{k}{n}\right)\\
		&= k \left(1 - \binom{n}{k}\:\left(1-\frac{k}{n}\right)^{n+1-k}\:\left(\frac{k}{n}\right)^k\right),
	\end{align*}
	where the last two equalities in the third line follow by \eqref{eq_proba_Rx_equal_i} and Corollary~\ref{cor:eq_h_N_k_at_k_over_n} respectively.
	Combining this with the bound $c \leq \mathbb{E}\big[r(R(x))\big] / k$ leads to the desired result.
\end{proof}

\subsection{Marginal viewpoint of CR schemes and efficient implementation}
\label{sec:efficient}

An important object in the design of CR schemes are the marginals.
Given a CR scheme $\pi$, a vector $x \in P_\mathcal{I}$, and a set $A \subseteq supp(x)$, the CR scheme returns a (potentially random) set $\pi_x (A) \in \mathcal{I}$ with $\pi_x (A) \subseteq A$. This defines a probability distribution over subsets of $A$, and hence also a set of marginals. More precisely, the marginals of $\pi_x(A)$ are given by $y_x^A \in [0,1]^N$ where  $(y_x^A)_e := \mathbb{P}[e \in \pi_x(A)]$. It is immediate that $y_x^A \in P_\mathcal{I}$, since $\pi_x(A) \in \mathcal{I}$.

Marginals are heavily used in \cite{crs_bipartite_matchings} to design an optimal CR scheme for bipartite matchings. Our next result provides an explicit formula for the marginals of the CR scheme described in Algorithm~\ref{algo_optimal_crs_ukn}. We postpone its proof to the appendix.

\begin{lemma}
	\label{lemma_prob_e_in_pi_x_A}
	The marginals of the CR scheme defined in Algorithm \ref{algo_optimal_crs_ukn} satisfy:
	\[
	\mathbb{P}[e \in \pi_x(A)] = \frac{k - x_e}{|A|} + \frac{x(A \setminus e)}{|A|(|A|-1)}.
	\]
\end{lemma}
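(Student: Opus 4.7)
The plan is to compute the marginal directly by expanding
\[
\mathbb{P}[e \in \pi_x(A)] = \sum_{B \subseteq A,\, |B| = k,\, e \in B} q_A(B),
\]
handling the trivial case $|A| \leq k$ separately (where $\pi_x(A) = A$) and treating the interesting case $|A| > k$ by plugging in the definition \eqref{q_A_B}. Writing $m := |A|$, this splits the sum into three pieces: a constant term $1$, a term involving $x(A \setminus B)/(m-k)$, and a term involving $-x(B)/k$, each divided by $\binom{m}{k}$.

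The constant piece is immediate since the number of $k$-subsets of $A$ containing $e$ is $\binom{m-1}{k-1}$, yielding $\binom{m-1}{k-1}/\binom{m}{k} = k/m$ after normalization. For the remaining two pieces, the key step is to swap the order of summation exactly as in the argument for \eqref{eq_sum_x_B}: for each $i \in A$, count how many $k$-subsets $B$ with $e \in B$ contain $i$. When $i = e$ this is $\binom{m-1}{k-1}$, and when $i \neq e$ it is $\binom{m-2}{k-2}$. Therefore
\[
\sum_{B \ni e,\, |B|=k} x(B) = x_e \binom{m-1}{k-1} + \binom{m-2}{k-2} x(A \setminus e),
\]
and subtracting this from $x(A) \binom{m-1}{k-1}$ and simplifying via $\binom{m-1}{k-1} - \binom{m-2}{k-2} = \binom{m-2}{k-1}$ gives
\[
\sum_{B \ni e,\, |B|=k} x(A \setminus B) = \binom{m-2}{k-1} x(A \setminus e).
\]

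What remains is a routine simplification of binomial ratios. Using $\binom{m-2}{k-1}/[(m-k)\binom{m}{k}] = k/[m(m-1)]$, $\binom{m-1}{k-1}/[k\binom{m}{k}] = 1/m$, and $\binom{m-2}{k-2}/[k\binom{m}{k}] = (k-1)/[m(m-1)]$, collecting the four contributions yields
\[
\mathbb{P}[e \in \pi_x(A)] = \frac{k}{m} + \frac{k \, x(A \setminus e)}{m(m-1)} - \frac{x_e}{m} - \frac{(k-1)\, x(A \setminus e)}{m(m-1)} = \frac{k - x_e}{m} + \frac{x(A \setminus e)}{m(m-1)},
\]
which is the claimed formula. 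There is no real obstacle here; the only thing to be careful about is bookkeeping the two different binomial counts $\binom{m-1}{k-1}$ vs.\ $\binom{m-2}{k-2}$ for the element $e$ itself versus the other elements of $A$, and keeping track of the $1/(m-k)$ and $1/k$ factors so that the final cancellation with $\binom{m-2}{k-1}$ produces the clean expression.
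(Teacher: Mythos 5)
Your proof is correct and follows essentially the same route as the paper's: both expand $\mathbb{P}[e \in \pi_x(A)]$ as the sum of $q_A(B)$ over the $k$-subsets $B \ni e$, evaluate the resulting sums of $x(B)$ (equivalently $x(A\setminus B)$) by the double-counting argument behind \eqref{eq_sum_x_B}, and finish with routine binomial-coefficient simplifications. The only difference is cosmetic bookkeeping (the paper reindexes via $B' = B \setminus \{e\}$ of size $k-1$, you keep $B$ of size $k$), and all your coefficient computations check out.
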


We now discuss an alternative viewpoint of the scheme provided in Algorithm~\ref{algo_optimal_crs_ukn}, which yields a more efficient implementation and a polynomial time algorithm for any value of $k$.
We use the marginal viewpoint of CR schemes described in \cite[see Section 2 and Proposition 1]{crs_bipartite_matchings}. 
As pointed out there, while the marginals carry less information than the CR scheme, the balancedness and monotonicity properties can be determined solely from the marginals --- see Definition~\ref{def:crs}. In particular, if two CR schemes have the same set of marginals, then they also have the same balancedness, and one scheme is monotone if and only if the other one is.

Lemma~\ref{lemma_prob_e_in_pi_x_A} gives an explicit formula for the marginals $y_x^A$ of the CR scheme defined in Algorithm~\ref{algo_optimal_crs_ukn}. Using this, one can design an efficient CR scheme with the same marginals as follows: First decompose $y_x^A \in P_\mathcal{I}$ into a convex combination of vertices of the polytope: $y_x^A = \sum_{i \in [m]} \lambda_i \mathbf{1}_{A_i}$, where $A_i \in \mathcal{I}$ for $i \in [m]$. Then, output the set $A_i$ with probability $\lambda_i$. Note that since the $\lambda_i$'s are a convex combination, the above sampling procedure is a well-defined probability distribution. Moreover, the convex combination $y_x^A = \sum_{i \in [m]} \lambda_i \mathbf{1}_{A_i}$ can be computed efficiently via standard methods --- see for instance \cite[Corollary 40.4a]{schrijver2003combinatorial} or \cite[Corollary 14.1f and 14.1g]{schrijver1998theory}. 

\subsection{Monotonicity}
\label{sec:monotonicity}
 
We next argue that Algorithm \ref{algo_optimal_crs_ukn} is a monotone CR scheme. This is a desirable property for CR schemes, since they can then be used to derive approximation guarantees for constrained submodular maximization problems. We need Lemma \ref{lemma_prob_e_in_pi_x_A}, i.e., the marginals of the CR scheme.

\begin{theorem}
	\label{thm_monotonicity_cr_scheme_ukn}
	Algorithm \ref{algo_optimal_crs_ukn} is a monotone CR scheme for $U^k_n$. That is, for every $x \in P_{\I}$ and $e \in A \subseteq B \subseteq \text{supp}(x)$, we have $\mathbb{P}[e \in \pi_x(A)] \geq \mathbb{P}[e \in \pi_x(B)].$
\end{theorem}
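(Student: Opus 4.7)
The approach is to leverage the explicit marginal formula from Lemma~\ref{lemma_prob_e_in_pi_x_A} and reduce the monotonicity statement to a one-step inequality by induction on $|B \setminus A|$. Concretely, it suffices to show that for any $e \in A \subseteq \text{supp}(x)$ and any $j \in \text{supp}(x) \setminus A$, one has $\mathbb{P}[e \in \pi_x(A)] \geq \mathbb{P}[e \in \pi_x(A \cup \{j\})]$.

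Before invoking the formula, two easy cases clear themselves away. If $|A \cup \{j\}| \leq k$, then $\pi_x(A) = A$ and $\pi_x(A \cup \{j\}) = A \cup \{j\}$ by the definition of Algorithm~\ref{algo_optimal_crs_ukn}, so both marginals equal $1$. If $|A| \leq k < |A \cup \{j\}|$, then the left-hand side equals $1$ while the right-hand side is a probability, hence at most $1$.

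This leaves the main case $|A| \geq k+1$, where Lemma~\ref{lemma_prob_e_in_pi_x_A} applies on both sides. Writing $a = |A|$ and $y = x(A \setminus e)$, I would expand the difference $\mathbb{P}[e \in \pi_x(A)] - \mathbb{P}[e \in \pi_x(A \cup \{j\})]$ over the common denominator $a(a-1)(a+1)$; after simplification the numerator collapses to $(a-1)(k - x_e - x_j) + 2y$. The key observation, which is where the matroid structure enters, is that $x \in P_{\I}$ forces $x(A \cup \{j\}) = x_e + y + x_j \leq k$, so $k - x_e - x_j \geq y \geq 0$. Combined with $a \geq 2$, this makes both summands in the numerator nonnegative, and the inequality follows.

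The argument is essentially an algebraic verification, so there is no serious obstacle once Lemma~\ref{lemma_prob_e_in_pi_x_A} is available; the only thing that needs a little care is the case split around $|A|$ crossing the threshold $k$, so that the closed-form marginal formula is only invoked where it is valid.
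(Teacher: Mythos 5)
Your proof is correct and takes essentially the same route as the paper: reduce monotonicity to adding a single element and verify the one-step inequality via Lemma~\ref{lemma_prob_e_in_pi_x_A}, the difference collapsing to $\frac{k - x_e - x_j}{|A|(|A|+1)} + \frac{2\,x(A\setminus e)}{(|A|^2-1)|A|} \geq 0$, which is exactly the paper's computation (your numerator $(a-1)(k-x_e-x_j)+2y$ over $a(a-1)(a+1)$ is the same expression). The only cosmetic difference is that the paper dismisses the case $|A| \leq k$ once at the outset (since then $\mathbb{P}[e \in \pi_x(A)] = 1$) instead of folding the threshold-crossing cases into the one-step induction as you do.
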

\begin{proof}
	Let $A \subseteq \text{supp}(x)$ and $e \in A$. If $|A| \leq k$, then $\mathbb{P}[e \in \pi_x(A)] = 1$,
	and the theorem trivially holds. We therefore suppose that $|A| > k$. In order to prove the theorem, it is clearly enough to show that for any $f \in \text{supp}(x) \setminus A$,
	\begin{equation}
		\label{eq_prob_e_in_pi_x_A_greater_than}
		\mathbb{P}[e \in \pi_x(A)] \geq \mathbb{P}[e \in \pi_x(A \cup \{f\})].
	\end{equation}
	We show the difference of those two terms is greater than 0 by using Lemma \ref{lemma_prob_e_in_pi_x_A} for both terms:
	
	\begin{align*}
		\mathbb{P}[e &\in \pi_x(A)] - \mathbb{P}[e \in \pi_x(A \cup \{f\})] = \frac{k-x_e}{|A|} + \frac{x(A\setminus e)}{|A|(|A|-1)} - \frac{k-x_e}{|A|+1} - \frac{x(A\setminus e)+x_f}{(|A|+1)|A|} \\
		&= \frac{k-x_e}{|A|} - \frac{k-x_e}{|A|+1} - \frac{x_f}{(|A|+1)|A|} + x(A \setminus e) \left(\frac{1}{|A|(|A|-1)} - \frac{1}{(|A|+1)|A|}\right)\\
		&= \frac{(|A|+1)(k-x_e) - |A|(k-x_e) - x_f}{|A|(|A|+1)} + \frac{2x(A\setminus e)}{(|A|^2-1)|A|}\\
		&= \frac{k - x_e - x_f}{|A|(|A|+1)} + \frac{2x(A\setminus e)}{(|A|^2-1)|A|} \geq 0.
	\end{align*}
	The last inequality holds because since $x \in P_{\I}= \{x \in [0,1]^N \mid x(N) \leq k\}$, we have $x_e + x_f \leq k$, and all the other terms are positive. We have thus shown \eqref{eq_prob_e_in_pi_x_A_greater_than} which is enough to prove the theorem.
\end{proof}

\subsection{Extension to partition matroids}
\label{sec:partition}

A CR scheme for uniform matroids can be naturally extended to a CR scheme for \emph{partition matroids}. This is not surprising since partition matroids can be seen as a direct sum of uniform matroids --- see Example~\ref{ex:partition}. For completeness, in this section we discuss how the results from Section~\ref{sec:uniform} lead to an optimal CR scheme for partition matroids. This is encapsulated in the following two results.

\begin{proposition}
	Let $\mathcal{M} = (N, \I)$ be a partition matroid given by $\I = \{A \subseteq N: |A \cap D_i| \leq d_i, \; \forall i \in \{1,\dots,k\} \}$.
	If there is a (monotone) $\alpha(k,n)$-balanced CR scheme for the uniform matroid $U_n^k$, then there is a (monotone) $\alpha$-balanced CR scheme for $\mathcal{M}$, where $\alpha=\min_{i \in [k]} \alpha(d_i,|D_i|)$.
\end{proposition}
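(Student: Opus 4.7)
The plan is to build the CR scheme for $\mathcal{M}$ by running the uniform matroid CR scheme independently on each block $D_i$. Given $x \in P_{\I}$, the polytope constraint $x(D_i) \leq d_i$ together with $x_j \in [0,1]$ guarantees that the restriction $x|_{D_i}$ belongs to the matroid polytope of $U_{|D_i|}^{d_i}$. For an input set $A \subseteq \text{supp}(x)$, I would split it into $A_i := A \cap D_i$ and define
\[
\pi_x(A) := \biguplus_{i=1}^{k} \pi^i_{x|_{D_i}}(A_i),
\]
where $\pi^i$ denotes the assumed $\alpha(d_i,|D_i|)$-balanced scheme for $U_{|D_i|}^{d_i}$. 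Feasibility is automatic since each piece satisfies $|\pi^i_{x|_{D_i}}(A_i)| \leq d_i$, so the union lies in $\I$.

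For balancedness, the key observation is that independent coordinate rounding makes the restrictions $\{R(x) \cap D_j\}_{j=1}^{k}$ mutually independent, with $R(x) \cap D_i$ equal in distribution to $R(x|_{D_i})$. Since $\pi^i$ only inspects coordinates in $D_i$, for any $e \in D_i$ the conditional probability $\mathbb{P}[e \in \pi_x(R(x)) \mid e \in R(x)]$ collapses to $\mathbb{P}[e \in \pi^i_{x|_{D_i}}(R(x|_{D_i})) \mid e \in R(x|_{D_i})]$, which is at least $\alpha(d_i,|D_i|) \geq \alpha$ by hypothesis. Taking the minimum over blocks yields the claimed balancedness.

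For the monotonicity statement, assume each $\pi^i$ is monotone and take $A \subseteq B \subseteq \text{supp}(x)$ with $e \in A$, say $e \in D_i$. Then $A_i \subseteq B_i$ and the output of $\pi^i$ on block $i$ does not depend on elements outside $D_i$, so applying the monotonicity of $\pi^i$ directly gives
\[
\mathbb{P}[e \in \pi_x(A)] = \mathbb{P}[e \in \pi^i_{x|_{D_i}}(A_i)] \geq \mathbb{P}[e \in \pi^i_{x|_{D_i}}(B_i)] = \mathbb{P}[e \in \pi_x(B)].
\]

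The proof contains no real obstacle: partition matroids decompose as a direct sum of uniform matroids on disjoint ground sets, and both the randomized input $R(x)$ and the proposed scheme factor cleanly along that decomposition. The only point requiring a moment's care is to confirm that each per-block scheme receives a valid uniform-matroid fractional input, which is exactly ensured by the block inequalities $x(D_i) \leq d_i$ built into $P_{\I}$.
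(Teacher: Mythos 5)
Your proof is correct and follows essentially the same route as the paper: run the uniform-matroid scheme independently on each block $D_i$, take the disjoint union of the outputs, and use that $R(x)\cap D_i$ has the law of the block-wise rounding to reduce both balancedness and monotonicity to the per-block guarantees. No gaps.
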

\begin{proof}
	For each $i \in [k]$, let $\pi^i$ be an $\alpha(d_i, |D_i|)$-balanced CR scheme for the uniform matroid $U^{d_i}_{|D_i|}$. 
	Let $P_i$ denote the matroid polytope of the uniform matroid $U^{d_i}_{|D_i|}$, and $P_{\I}$ denote the matroid polytope of the partition matroid $\mathcal{M} = (N, \I)$.
	Given any $x \in P_{\I}$, let $x^i \in [0,1]^{D_i}$ denote the restriction of $x$ to $D_i$. Since $x \in P_{\I}$, it is clear that $x^i \in P_i$.
	
	Consider the CR scheme $\pi$ defined as follows, $\pi_x(A) = \biguplus_{i \in [k]} \pi_{x^i}^i(A \cap D_i)$. 
	That is, we run the CR schemes $\pi^i$ independently in each partition $D_i$ and take the (disjoint) union of their outputs.
	Let $e \in N$ be such that $e \in D_i$. Then,
	$\mathbb{P}[e \in \pi_x(R(x))] = \mathbb{P}[e \in \pi^i_{x^i}(R(x) \cap D_i)]  \geq \alpha(d_i,|D_i|) \cdot x_e$.
	Hence, it follows that $\pi$ is an $\alpha$-balanced CR scheme for $\mathcal{M}$, where $\alpha=\min_{i \in [k]} \alpha(d_i,|D_i|)$.
	
	For the monotonicity part, assume that the CR schemes $\pi^i$ defined above are all monotone. 
	Let $x \in P_{\I}$,  $e \in A \subseteq B$, and $i \in [k]$ be the unique index such that $e \in D_i$. Then, 
	$\mathbb{P}[e \in \pi_x(A)] = \mathbb{P}[e \in \pi^i_{x_i}(A \cap D_i)] \geq \mathbb{P}[e \in \pi^i_{x_i}(B \cap D_i)] = \mathbb{P}[e \in \pi_x(B)].$
	Hence $\pi$ is also monotone. 
\end{proof}

\begin{proposition}
	Let $\mathcal{M} = (N, \I)$ be a partition matroid given by $\I = \{A \subseteq N: |A \cap D_i| \leq d_i, \; \forall i \in \{1,\dots,k\} \}$.
	If there is no $\alpha(k,n)$-balanced CR scheme for the uniform matroid $U_n^k$, then there is no $\alpha$-balanced CR scheme for $\mathcal{M}$, where $\alpha=\min_{i \in [k]} \alpha(d_i,|D_i|)$.
\end{proposition}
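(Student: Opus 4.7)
The plan is to prove the contrapositive by a reduction: embed the ``worst'' uniform matroid $U^{d_j}_{|D_j|}$ (with $j$ achieving the minimum $\alpha(d_j,|D_j|) = \min_{i \in [k]} \alpha(d_i,|D_i|)$) into $\mathcal{M}$ by concentrating all of the fractional mass inside the single block $D_j$, and then restrict any CR scheme for $\mathcal{M}$ to obtain a CR scheme for $U^{d_j}_{|D_j|}$ with at least the same balancedness. Since the non-existence hypothesis rules out such a scheme for $U^{d_j}_{|D_j|}$, this contradicts the assumed existence of a scheme for $\mathcal{M}$.

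Concretely, suppose $\pi$ is an $\alpha$-balanced CR scheme for $\mathcal{M}$ with $\alpha = \min_{i \in [k]} \alpha(d_i,|D_i|)$. Given any $y \in P_{U^{d_j}_{|D_j|}}$, I would extend $y$ to a point $x \in P_{\I}$ by setting $x_i = y_i$ for $i \in D_j$ and $x_i = 0$ otherwise. This $x$ lies in $P_{\I}$ because $x(D_j) = y(D_j) \leq d_j$ and $x(D_i) = 0 \leq d_i$ for $i \neq j$. Define $\pi'_y(A) := \pi_x(A)$ for every $A \subseteq \text{supp}(y) \subseteq D_j$. The key structural observation is that $\pi_x(A) \subseteq A \subseteq D_j$, so the only potentially binding constraint of $\mathcal{M}$ on $\pi_x(A)$ is the one associated with the block $D_j$; hence $\pi_x(A)$ is automatically independent in $U^{d_j}_{|D_j|}$ and $\pi'$ is well-defined. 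Moreover, since all coordinates of $x$ outside $D_j$ are zero, $R(x)$ coincides with $R(y)$ in distribution (when both are viewed as random subsets of $D_j$). Consequently, for every $e \in \text{supp}(y)$,
\[
\mathbb{P}[e \in \pi'_y(R(y)) \mid e \in R(y)] \;=\; \mathbb{P}[e \in \pi_x(R(x)) \mid e \in R(x)] \;\geq\; \alpha \;\geq\; \alpha(d_j,|D_j|),
\]
so $\pi'$ is an $\alpha(d_j,|D_j|)$-balanced CR scheme for $U^{d_j}_{|D_j|}$, contradicting the hypothesis.

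There is no real obstacle here: the argument is essentially a tautology once one exploits the direct-sum structure of partition matroids that was already used in the previous proposition. The only routine checks are \emph{(i)} that the extension $x$ is indeed in $P_{\I}$, and \emph{(ii)} that restricting the output of a partition-matroid CR scheme to a set concentrated inside a single block yields an independent set in that block's uniform matroid; both follow immediately from the definitions. The ``min'' in the statement is what makes the reduction work, since we only need to embed the \emph{worst} uniform matroid to derive a contradiction.
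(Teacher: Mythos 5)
Your proposal is correct and follows essentially the same argument as the paper: extend a point of the worst block's uniform matroid polytope to $P_{\I}$ by setting all coordinates outside $D_j$ to zero, and use the assumed $\alpha$-balanced scheme for $\mathcal{M}$ to obtain an $\alpha(d_j,|D_j|)$-balanced scheme for $U^{d_j}_{|D_j|}$, contradicting the hypothesis. Your write-up even makes explicit a couple of routine checks (independence of the restricted output within the block and the distributional identification of $R(x)$ with $R(y)$) that the paper leaves implicit.
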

\begin{proof}
	Assume such a CR scheme $\pi$ exists, and let $j \in \text{argmin}_{i \in \{1, \dots, k \}} \alpha(d_i, |D_i|)$. 
	Let $P$ denote the matroid polytope of the uniform matroid $U^{d_j}_{|D_j|}$, and let $P_{\I}$ denote the matroid polytope of the partition matroid $\mathcal{M} = (N, \I)$.
	Then, for any $\bar{x} \in P$, let $x \in [0,1]^N$ be defined as $x_e = \bar{x}_e$ if $e \in D_j$, and $x_e = 0$ otherwise. Clearly, $x \in P_{\I}$ since $\bar{x} \in P$. Hence, $\mathbb{P}[e \in \pi_x(R(x)) \mid e \in R(x)] \geq \alpha = \alpha(d_j,|D_j|)$. But this contradicts the assumption that there is no $\alpha(d_j,|D_j|)$-balanced CR scheme for the uniform matroid $U^{d_j}_{|D_j|}$.
\end{proof}

\section{Conclusion}

Contention resolution schemes are a general and powerful tool for rounding a fractional point in a relaxation polytope. It is known that matroids admit $(1-1/e)$-balanced CR schemes, and that this is the best possible. This impossibility result is in particular true for uniform matroids of rank one. 
For uniform matroids of rank $k$ (i.e., cardinality constraints), one can get a $(1 - e^{-k} k^k/k!)$-balanced CR scheme by combining a reduction from \cite{crs} and a result from \cite{yan2011mechanism}. The main drawback of this approach, however, is its lack of simplicity.
In this work, we provide an explicit and much simpler scheme with a balancedness of $c(k,n):= 1 - \binom{n}{k}\left(1-\frac{k}{n}\right)^{n+1-k}\left(\frac{k}{n}\right)^k$. In particular, $c(k,n) > 1 - e^{-k} k^k/k!$ for every $n$, and $c(k,n)$ converges to $1 - e^{-k} k^k/k!$ as $n$ goes to infinity. Our balancedness is therefore better for every fixed $n$, and achieves $1 - e^{-k} k^k/k!$ asymptotically. We also show optimality and monotonicity of our scheme, and discuss how it naturally extends to an optimal CR scheme for partition matroids.

We believe that finding other classes of matroids where the $1-1/e$ balancedness factor can be improved is an interesting direction for future work.
Moreover, while this work focused on the offline setting, this question can also be studied in the context where the elements of $R(x)$ arrive in an online fashion (e.g., in the case of random or online contention resolution schemes). Finally, it would also be interesting to see whether a simpler proof of the optimality of our algorithm can be obtained.

\subsubsection*{Acknowledgements}
We thank Chandra Chekuri and Vasilis Livanos for useful feedback and for mentioning the work of \cite{barman2021tight}. We also thank the anonymous reviewers for valuable suggestions, and for pointing out the connection with the work of \cite{yan2011mechanism}.

%\subsubsection*{Funding}
%This project received funding from the European Research Council (ERC) under the European Union's Horizon 2020 research and innovation programme (grant agreement No 817750).
%\includegraphics[height=4mm]{eu_emblem} \includegraphics[height=4mm]{erc_logo}

\bibliographystyle{plain}
\bibliography{references}

\appendix
\section{Appendix}

  \subsection*{Proof of Lemma \ref{lemma_recursive_h_S_k}}
  \begin{proof}[Proof of Lemma \ref{lemma_recursive_h_S_k}]
  	Notice that for $i \in A$, $p_S(A)\:(1-x_i) = p_S(A \setminus i) \: x_i$. Then

  	\begin{align}
  		h_S^k(x) &= \sum_{\substack{A \subseteq S\\ |A|=k}} p_S(A)\:\sum_{i \in A} (1-x_i) = \sum_{\substack{A \subseteq S\\ |A|=k}} \:\sum_{i \in S} \: p_S(A) (1-x_i) \mathbf{1}_{\{i \in A\}} \nonumber\\
  		&= \sum_{i \in S} \sum_{\substack{A \subseteq S\\ |A|=k}} x_i \: p_S(A \setminus i)\: \mathbf{1}_{\{i \in A\}} = \sum_{i \in S} \sum_{\substack{B \subseteq S \setminus i\\|B|=k-1}} x_i \; p_S(B) \nonumber \\
  		&= \sum_{i \in S} x_i \sum_{\substack{A \subseteq S \setminus i\\|A|=k-1}} p_S(A) = \sum_{i \in S} x_i \left(\sum_{\substack{A \subseteq S \\|A|=k-1}} p_S(A) - \sum_{\substack{A \subseteq S \nonumber \\|A|=k-1}} p_S(A) \mathbf{1}_{\{i \in A\}}\right) \\
  		&= x(S) Q_S^{k-1}(x) - \sum_{\substack{A \subseteq S \\|A|=k-1}}p_S(A)x(A). \label{eq_h_N_part_one}
  	\end{align}
  	Notice that by definition of $h^k_S(x)$ we have
  	\begin{align}
  		h_S^{k-1}(x) &= \sum_{\substack{A \subseteq S\\|A|=k-1}}p_S(A)(k-1-x(A)) = (k-1)\sum_{\substack{A \subseteq S\\|A|=k-1}}p_S(A) - \sum_{\substack{A \subseteq S \\|A|=k-1}}p_S(A)x(A) \nonumber \\&= (k-1) \: Q_S^{k-1}(x) - \sum_{\substack{A \subseteq S \\|A|=k-1}}p_S(A)x(A). \label{eq_h_N_Q_N}
  	\end{align}
  	Substracting \eqref{eq_h_N_Q_N} from \eqref{eq_h_N_part_one} we get
  	\[
  		h_S^k(x) - h_S^{k-1}(x) = Q_S^{k-1}(x) \Big(x(S) - (k-1) \Big).
  	\]
  	We can rewrite this recursive formula as
	\[
  		h_S^{i+1}(x) - h_S^i(x) = Q_S^i(x) \Big(x(S)-i \Big).
	\]
  	By summing both sides from $0$ to $k-1$ and noticing that $h_S^0(x) = 0$, we get the desired result.
  \end{proof}

  \subsection*{Proof of Lemma \ref{lemma_inductive_step}}
  \begin{proof}[Proof of Lemma \ref{lemma_inductive_step}]
  	First, notice that the function $g(x):=\left(\frac{x-1}{x}\right)^x$ is strictly increasing for $x \geq 1$.
  	Indeed, by using the strict inequality $\log(1+x) < x$ for any $x>0$, we see that the derivative of $\log(g(x))$ is strictly positive:
  	
  	\begin{align*}
  		\frac{d}{dx} \log(g(x)) &= \frac{d}{dx} x \log\left(\frac{x-1}{x}\right) = \log\left(\frac{x-1}{x}\right) + x \frac{x}{x-1} \frac{1}{x^2} = \log\left(\frac{x-1}{x}\right) + \frac{1}{x-1}\\
  		&= \frac{1}{x-1} - \log\left(\frac{x}{x-1}\right) = \frac{1}{x-1} - \log\left(1 + \frac{1}{x-1}\right) > 0.
  	\end{align*}
  	
  	\noindent We first prove \eqref{lemma_inductive_first_statement}. If $k = 1$, then $\alpha(k-1,n-1) = 0$ and the statement clearly holds. We may thus assume $k>1$. Then,
  	
  	\begin{align*}
  		\frac{\alpha(k,n)}{\alpha(k-1,n-1)} &= \frac{k}{k-1} \; \frac{\binom{n}{k}}{\binom{n-1}{k-1}} \; \left(\frac{n-k}{n}\right)^{n+1-k}\; \left(\frac{k}{n}\right)^k \left(\frac{n-1}{n-k}\right)^{n+1-k}\; \left(\frac{n-1}{k-1}\right)^{k-1} \\
  		&= \frac{k}{k-1} \; \frac{n}{k} \; \frac{k^k \; (n-1)^n}{n^{n+1}(k-1)^{k-1}}
  		= \left(\frac{n-1}{n}\right)^n \left(\frac{k}{k-1}\right)^{k}
  		= \frac{g(n)}{g(k)} > 1.
  	\end{align*}
  	
  	\noindent We now prove \eqref{lemma_inductive_second_statement}. If $k=n-1$, then $\alpha(k,n-1) = 0$ and the statement clearly holds. We may thus assume $k < n-1$. Then,
  	
  	\begin{align*}
  		\frac{\alpha(k,n)}{\alpha(k,n-1)} &= \frac{\binom{n}{k}}{\binom{n-1}{k}} \; \left(\frac{n-k}{n}\right)^{n+1-k}\; \left(\frac{k}{n}\right)^k \left(\frac{n-1}{n-1-k}\right)^{n-k}\; \left(\frac{n-1}{k}\right)^{k}\\
  		&= \frac{n}{n-k} \frac{(n-k)^{n+1-k} \; (n-1)^n}{n^{n+1} \; (n-1-k)^{n-k}}
  		= \left(\frac{n-1}{n}\right)^n \left(\frac{n-k}{n-k-1}\right)^{n-k}
  		= \frac{g(n)}{g(n-k)} > 1.\qedhere
  	\end{align*}
  \end{proof}

   \subsection*{Proof of Lemma \ref{lemma_prob_e_in_pi_x_A}}
  \begin{proof}[Proof of Lemma \ref{lemma_prob_e_in_pi_x_A}]
  	\begin{align}
  		\mathbb{P}[e \in \pi_x(A)] &= \sum_{\substack{B \subseteq A\\|B|=k\\e \in B}}q_A(B) = \sum_{\substack{B \subseteq A \setminus e \\ |B|=k-1}}q_A(B \cup e)\nonumber\\
  		&= \sum_{\substack{B \subseteq A \setminus e \\ |B|=k-1}}\frac{1}{\mbinom{|A|}{k}}\left(1 + \frac{x(A \setminus e) - x(B)}{|A|-k} - \frac{x(B)+x_e}{k}\right)\nonumber\\
  		&= \sum_{\substack{B \subseteq A \setminus e \\ |B|=k-1}}\frac{1}{\mbinom{|A|}{k}}\left(1 - \frac{x_e}{k} + \frac{x(A \setminus e)}{|A|-k} - x(B)\left(\frac{1}{|A|-k} + \frac{1}{k}\right)\right)\nonumber\\
  		&= \sum_{\substack{B \subseteq A \setminus e \\ |B|=k-1}}\frac{1}{\mbinom{|A|}{k}}\left(\frac{k-x_e}{k} + \frac{x(A \setminus e)}{|A|-k} - x(B)\frac{|A|}{k(|A|-k)} 
  		\right). \label{eq_prob_e_in_pi_x_A_first_part}
  	\end{align}
  	We now use Equation \eqref{eq_sum_x_B} in a slightly modified form:
  	\begin{equation}
  		\label{eq_sum_x_b_slightly_modified}
  		\sum_{\substack{B \subseteq A \setminus e \\ |B|=k-1}}x(B) = \mbinom{|A|-2}{k-2}x(A\setminus e).
  	\end{equation}
  	The only part in the sum \eqref{eq_prob_e_in_pi_x_A_first_part} that depends on $B$ is the last term with $x(B)$. Hence, by plugging-in \eqref{eq_sum_x_b_slightly_modified} into \eqref{eq_prob_e_in_pi_x_A_first_part}, we get:
  	\begin{equation}
  		\label{eq_prob_e_in_pi_x_A_second_part}
  		\mbinom{|A|}{k}\mathbb{P}[e \in \pi_x(A)] = \mbinom{|A|-1}{k-1}\ \left(\frac{k-x_e}{k}\right) + \mbinom{|A|-1}{k-1}\frac{x(A \setminus e)}{|A|-k} - \mbinom{|A|-2}{k-2}\frac{|A|}{k(|A|-k)}x(A \setminus e).
  	\end{equation}
  	We now use the formula
  	%\begin{align*}
  	$\binom{n}{k} = \frac{n}{k}\binom{n-1}{k-1}
  	$
  	%\end{align*}
  	to remove all the binomial coefficients from \eqref{eq_prob_e_in_pi_x_A_second_part}. We get
  	\begin{align*}
  		\frac{|A|}{k} \mathbb{P}[e \in \pi_x(A)] &= \frac{k-x_e}{k} + \frac{x(A\setminus e)}{|A|-k} - \frac{k-1}{|A|-1}\; \frac{|A|}{k(|A|-k)}x(A \setminus e) \\
  		&= \frac{k-x_e}{k} + \frac{x(A\setminus e)}{|A|-k}\left(1 - \frac{|A|(k-1)}{(|A|-1)k}\right) 
  		= \frac{k-x_e}{k} + \frac{x(A\setminus e)}{k(|A|-1)}.
  	\end{align*}
  	This implies the desired result:
  	\[
  	\mathbb{P}[e \in \pi_x(A)] = \frac{k-x_e}{|A|} + \frac{x(A\setminus e)}{|A|(|A|-1)}. \qedhere
  	\]
  \end{proof}

  \subsection*{Proof of local maximality in Proposition \ref{prop_interior}}
  \begin{proof}
  	
  	We want to show that the point $(k/n,\dots,k/n)$ is a local maximum. We do that by computing the Hessian matrix $H(x)$ and showing that $H(k/n,\dots,k/n)$ is negative definite. Note that $H(x)$ is a $(n-1)\times(n-1)$ matrix defined by:
  	
  	\begin{align*}
  		\label{eq_hessian_matrix}
  		H(x)_{i,j}= \frac{\partial^2h_S^k(x)}{\partial x_i \partial x_j}.
  	\end{align*}
  	By some simple computations, we have
  	\begin{equation}
  		\frac{\partial^2 h_S^k}{\partial x_i^2}(k/n,\dots,k/n) = -2 \binom{n-2}{k-1}\left(\frac{k}{n}\right)^{k-1}\left(\frac{n-k}{n}\right)^{n-k-1} \qquad \forall i \in S
  	\end{equation}
  	and
  	\begin{equation}
  		\frac{\partial^2h_S^k}{\partial x_i \partial x_j}(k/n,\dots,k/n) = -\binom{n-2}{k-1}\left(\frac{k}{n}\right)^{k-1}\left(\frac{n-k}{n}\right)^{n-k-1} \qquad \text{for } i \neq j
  	\end{equation}
  	
  	\noindent Therefore,
  	
  	\begin{equation}
  		\label{eq_hessian_at_k_over_n}
  		H(k/n,\dots,k/n) = - c
  		\begin{pmatrix}
  			2 & 1 & 1 & \dots & 1 \\
  			1 & 2 & 1 & \dots & 1 \\
  			\vdots & & & & \vdots \\
  			1 & 1 & 1 & \dots & 2
  		\end{pmatrix}
  		=: -c \: A
  	\end{equation}
  	where \[c:= \binom{n-2}{k-1}\left(\frac{k}{n}\right)^{k-1}\left(\frac{n-k}{n}\right)^{n-k-1}>0.\]
  	
  	Our goal is to show that $H(k/n, \dots, k/n) \in \mathbb{R}^{(n-1) \times (n-1)}$ is negative-definite. Notice that $\lambda$ is an eigenvalue of $H(k/n,\dots,k/n)$ with corresponding eigenvector $v \in \mathbb{R}^{n-1}$ if and only if $-\lambda / c$ is an eigenvalue of $A$ with the same eigenvector $v \in \mathbb{R}^{n-1}$. It is thus enough to show that $A$ is positive-definite, i.e., all the eigenvalues of $A$ are positive.
  	
  	Notice that $A = I_{n-1} + J_{n-1}$, where $I_{n-1}$ and $J_{n-1}$ are respectively the identity matrix and the all-ones matrix of size $(n-1)\times(n-1)$. In particular, we may rewrite this as 
  	\begin{equation}
  		A = I_{n-1} + e \: e^T
  	\end{equation}
  	where $e\in \mathbb{R}^{n-1}$ is the all-ones vector.
  	
  	Let $\mu$ be an eigenvalue of $A$ with corresponding eigenvector $v$. Then
  	
  	\begin{align*}
  		Av = \mu v &\iff v + (e^Tv)e = \mu v \nonumber\\
  		&\iff  (e^Tv)e = (\mu - 1)v.
  	\end{align*}
  	
  	\begin{itemize}
  		\item If $\mu = 1$, the corresponding eigenspace is $\{v \in \mathbb{R}^{n-1} \mid e^T v = 0\}$. This eigenspace is a hyperplane of dimension $n-2$, which means that there exists $n-2$ linearly independent eigenvectors corresponding to the eigenvalue $\mu = 1$.
  		\item If $\mu \neq 1$, then we see that $e$ and $v$ are collinear, which means that $e$ is an eigenvector corresponding to $\mu$. We compute the value of $\mu$:
  		\begin{align*}
  			Ae=\mu e \iff e + (e^Te)e = \mu e \iff e +(n-1)e = \mu e \iff \mu = n.
  		\end{align*}
  	\end{itemize}
  	
  	\noindent Hence, the spectrum of $A$ is equal to $\{1,n\}$, where the multiplicity of the eigenvalue $1$ is $n-2$, whereas the multiplicity of the eigenvalue $n$ is $1$. We have therefore just proven that $A$ is positive-definite, which, by \eqref{eq_hessian_at_k_over_n}, implies that $H(k/n,\dots,k/n)$ is negative-definite and concludes the proof.
  \end{proof}

\end{document}